\documentclass[12pt,english]{article}
\usepackage{mathptmx}

\usepackage[T1]{fontenc}
\usepackage[a4paper]{geometry}
\geometry{verbose,tmargin=1in,bmargin=1in,lmargin=0.1\paperwidth,rmargin=0.1\paperwidth}
\usepackage{amsthm}
\usepackage{amsmath}
\usepackage{amssymb}
\usepackage{graphicx}
\usepackage{setspace}
\usepackage[authoryear]{natbib}
\onehalfspacing

\makeatletter
\numberwithin{table}{section}
\numberwithin{figure}{section}
\usepackage{enumitem}		
\theoremstyle{plain}
\newtheorem{thm}{\protect\theoremname}[section]
\theoremstyle{definition}
\newtheorem{defn}[thm]{\protect\definitionname}
\ifx\proof\undefined
\newenvironment{proof}[1][\protect\proofname]{\par
\normalfont\topsep6\p@\@plus6\p@\relax
\trivlist
\itemindent\parindent
\item[\hskip\labelsep
\scshape
#1]\ignorespaces
}{%
\endtrivlist\@endpefalse
}
\providecommand{\proofname}{Proof}
\fi
\theoremstyle{plain}
\newtheorem{lem}[thm]{\protect\lemmaname}
\theoremstyle{plain}
\newtheorem{cor}[thm]{\protect\corollaryname}
\theoremstyle{definition}
\newtheorem{example}[thm]{\protect\examplename}


\usepackage{amsfonts,amssymb,amsthm,amsbsy,amssymb,amsmath,mathrsfs}
\usepackage{txfonts,graphicx,mathrsfs,rotating,array,sectsty}
\usepackage[symbol]{footmisc}
\usepackage{natbib}
\usepackage{fancyref}
\usepackage[linkcolor=blue,citecolor=blue, colorlinks]{hyperref}
\usepackage{bm}

\newcommand{\q}{\quad}
\newcommand{\qq}{\qquad}
\newcommand{\mcl}[1]{\mathcal{#1}}

\newcommand{\mbb}[1]{\mathbb{#1}}
\newcommand{\pr}{\hbox{\sf P}}

\newcommand{\ep}{\hbox{\sf E}}

\newcommand{\tto}{\rightarrow}

\newcommand{\dt}{\mathrm{d}}

\setcitestyle{round}

\numberwithin{equation}{section}

\allowdisplaybreaks

\sectionfont{\rmfamily\mdseries\large\bf}
\subsectionfont{\rmfamily\mdseries\normalsize\bf}
\subsubsectionfont{\rmfamily\mdseries\normalsize\it}

\makeatother

\usepackage{babel}
\providecommand{\corollaryname}{Corollary}
\providecommand{\definitionname}{Definition}
\providecommand{\examplename}{Example}
\providecommand{\lemmaname}{Lemma}
\providecommand{\theoremname}{Theorem}

\begin{document}

\title{\textbf{\Large{On the Dividend Strategies with Non-Exponential Discounting}}}

\author{Qian Zhao \thanks{School of Finance and Statistics, East China Normal University, Shanghai, 200241, China. E-mail: qzhao31@gmail.com},
\q Jiaqin Wei\thanks{Corresponding author. Department of Applied Finance and Actuarial Studies, Faculty of Business and Economics, Macquarie University, Sydney, NSW 2109, Australia. E-mail: jiaqinwei@gmail.com},
\q Rongming Wang\thanks{School of Finance and Statistics, and Research Center of International Finance and Risk Management, East China Normal University, Shanghai, 200241, China. E-mail: rmwang@stat.ecnu.edu.cn}}

\date{\normalsize\it\today}
\maketitle
\begin{abstract}
In this paper, we study the dividend strategies for a shareholder
with non-constant discount rate in a diffusion risk model. We assume
that the dividends can only be paid at a bounded rate and restrict
ourselves to Markov strategies. This is a time inconsistent control
problem. The equilibrium HJB-equation is given and the verification
theorem is proved for a general discount function. Considering a mixture
of exponential discount functions and a pseudo-exponential discount
function, we get equilibrium dividend strategies and the corresponding
equilibrium value functions by solving the equilibrium HJB-equations.

\textit{Keywords:} Dividend strategies; Non-exponential discounting;
Time inconsistence; Equilibrium strategies; Equilibrium HJB-equation
\end{abstract}

\section{Introduction }

Since it was proposed by \citet{df57}, the optimization of dividend
strategy has been investigated by many researchers under various risk
models. This problem is usually phrased as the management\textquoteright{}s
problem of determining the optimal timing and the size of dividend
payments in the presence of bankruptcy risk. For more literature on
this problem, we refer the reader to a recent survey paper by \citet{a09}. 

In the very rich literature, a common assumption is that the discount
rate is constant over time so the discount function is exponential.
However, some empirical studies of human behavior suggest that the
assumption of constant discount rate is unrealistic, see, e.g., \citet{t81},
\citet{a92} and \citet{lp92}. Indeed, there is experimental evidence
that people are impatient about choices in the short term but are
more patient when choosing between long-term alternatives. More precisely,
events in the near future tend to be discounted at a higher rate than
events that occur in the long run. Considering such effect, individual
behavior is best described by the hyperbolic discounting (see \citet{pp68}),
which has been extensively studied in the areas of microeconomics,
macroeconomics, and behavioral finance, such as \citet{l97} and \citet{b99}
among others.

However, difficulties arise when we try to solve an optimal control
problem with a non-constant discount rate by the standard dynamic
programming approach. In fact, the standard optimal control techniques
give rise to time inconsistent strategies, i.e, a strategy that is
optimal for the initial time may be not optimal later. This is the
so-called time inconsistent control problem and the classical dynamic
programming principle does no longer hold. \citet{s55} studies the
time inconsistent problem within a game theoretic framework by using
Nash equilibrium points. They seek the equilibrium policy as the solution
of a subgame-perfect equilibrium where the players are the agent and
her future selves. 

Recently, there is an increasing attention in the time inconsistent
control problem due to the practical applications in economics and
finance. A modified HJB equation is derived in \citet{mn10} which
solves an optimal consumption and investment problem with the non-constant
discount rate for both naive and sophisticated agents. A similar problem
is also considered by another approach in \citet{el06} and \citet{ep08},
which provide the precise definition of the equilibrium concept in
continuous time for the first time. They characterize the equilibrium
policies through the solutions of a flow of BSDEs, and they show,
for a special form of the discount factor, that this BSDE reduces
to a system of two ODEs which has a solution. Considering the hyperbolic
discounting, \citet{emp12} studies the portfolio management problem
for an investor who is allowed to consume and take out life insurance,
and they characterize the equilibrium strategy by an integral equation.
Following this definition of the equilibrium strategy, \citet{bm10}
studied the time-inconsistent control problem in a general Markov
framework, and derived the equilibrium HJB-equation together with
the verification theorem. \citet{bmz12} studied the Markowitz's problem
with state-dependent risk aversion by utilizing the equilibrium HJB-equation
obtained in \citet{bm10}.

In this paper, we study the dividend strategies for the shareholders
with non-constant discount rate in a diffusion risk model. We assume
that the dividends can only be paid at a bounded rate and restrict
ourselves to Markov strategies. We use the equilibrium HJB-equation
to solve this problem. In contrast to the papers mentioned above which
consider a fixed time horizon or an infinite time horizon, in the
dividend problem the ruin risk should be taken into account and the
time horizon is a random variable (the time of ruin). Thus, the equilibrium
HJB-equation given in this paper looks different with the one obtained
in \citet{bm10}. We first give the equilibrium HJB-equation which
is motivated by \citet{y12} and the verification theorem for a general
discount function. Then we solve the equilibrium HJB-equation for
two special non-exponential discount functions: a mixture of exponential
discount function and a pseudo-exponential discount function. For
more details about these discount functions, we refer the reader to
\citet{el06} and \citet{ep08}. Under the mixture of exponential
discount function, our results show that if the bound of the dividend
rate is small enough then the equilibrium strategy is to always pay
the maximal dividend rate; otherwise, the equilibrium strategy is
to pay the maximal dividend rate when the surplus is above a barrier
and pay nothing when the surplus is below the barrier. Given some
conditions, the results are similar under the pseudo-exponential discount
function. These features of the equilibrium dividend strategies are
similar to the optimal strategies obtained in \citet{at97} which
considers the exponential discounting in the diffusion risk model.

The remainder of this paper is organized as follows. The dividend
problem and the definition of an equilibrium strategy are given in
Section 2. The equilibrium HJB-equation and a verification theorem
are presented in Section 3. In Section 4, we study two cases with
a mixture of exponential discount functions and a pseudo-exponential
discount function.

\section{The model}

In the case of no control, the surplus process is assumed to follow
\[
\dt X_{t}=\mu\dt t+\sigma\dt W_{t},\qq t\geq0,
\]
where $\mu,\sigma$ are positive constants and $\{W_{t}\}_{t\geq0}$
is a one-dimensional standard Brownian motion on a filtered probability
space $\left(\Omega,\mathcal{F},\left\{ \mathcal{F}_{t}\right\} _{t\geq0},\pr\right)$
satisfying the usual hypotheses. The filtration $\left\{ \mathcal{F}_{t}\right\} _{t\geq0}$
is completed and generated by $\left\{ W_{t}\right\} _{t\geq0}$. 

A dividend strategy is described by a stochastic process $\left\{ l_{t}\right\} _{t\geq0}$.
Here, $l_{t}\geq0$ is the rate of dividend payout at time $t$ which
is assumed to be bounded by a constant $M>0$. We restrict ourselves
to the feedback control strategies (Markov strategies), i.e. at time
$t$, the control $l_{t}$ is given by 
\[
l_{t}=\pi(t,x),
\]
where $x$ is the surplus level at time $t$ and the control law $\pi:[0,\infty)\times[0,\infty)\tto[0,M]$
is a Borel measurable function. 

When applying the control law $\pi$, we denote by $\{X_{t}^{\pi}\}_{t\geq0}$
the controlled risk process. Considering the controlled system starting
from the initial time $t\in[0,\infty)$, $\{X_{s}^{\pi}\}$ evolves
according to
\begin{align}
\begin{cases}
\dt X_{s}^{\pi} & =\mu\dt s+\sigma\dt W_{s}-\pi(s,X_{s}^{\pi})\dt s,\q s\geq t,\\
X_{t}^{\pi} & =x.
\end{cases}\label{eq:2-0}
\end{align}

Let 
\[
\tau_{t}^{\pi}:=\inf\left\{ s\geq t:X_{s}^{\pi}\leq0\right\} 
\]
be the time of ruin under the control law $\pi$. Without loss of
generality, we assume that $X_{s}^{\pi}\equiv0$ for $s\geq\tau_{t}^{\pi}$. 

Let $h:[0,\infty)\rightarrow[0,\infty)$ be a discount function which
satisfies $h(0)=1$, $h(s)\geq0$ and $\int_{0}^{\infty}h(t)\dt t<\infty$.
Furthermore, $h$ is assumed to be continuously differentiable on
$[0,\infty)$ and $h'(x)\leq0$.
\begin{defn}
\label{def:2-1}A control law $\pi$ is said to be admissible if it
satisfies: $0\leq\pi(t,x)\leq M$ for all $(t,x)\in[0,\infty)\times[0,\infty)$,
$\pi(t,0)\equiv0$ for all $t\in[0,\infty)$. We denote by $\Pi$
the set of all admissible control laws. 
\end{defn}
For a given admissible control law $\pi$ and an initial state $(t,x)\in[0,\infty)\times[0,\infty)$,
we define the return function $V^{\pi}$ by 
\[
V^{\pi}(t,x)=\ep_{t,x}\left[\int_{t}^{\tau_{t}^{\pi}}h(s-t)\pi(s,X_{s}^{\pi})\dt s\right],
\]
where $\ep_{t,x}[\cdot]$ is the expectation conditioned on the event
$\{X_{t}^{\pi}=x\}$. Note that for any admissible strategy $\pi\in\Pi$,
we have
\begin{equation}
\ep_{t,x}\left[\int_{t}^{\tau_{t}^{\pi}}\left|h(s-t)\pi(s,X_{s}^{\pi})\right|\dt s\right]\leq M\int_{0}^{\infty}h(t)\dt t<\infty,\q\forall(t,x)\in[0,\infty)\times[0,\infty),\label{eq:a-0}
\end{equation}
which means the performance functions $V^{\pi}(t,x)$ are well-defined
for all admissible strategies.

In classical risk theory, the optimal dividend strategy, denoted by
$\pi^{*}$, is an admissible strategy such that 
\[
V^{\pi^{*}}(t,x)=\sup_{\pi\in\Pi}V^{\pi}(t,x).
\]
However, in our settings, this optimization problem is time-inconsistent
in the sense that the Bellman optimality principle fails. 

Similar to \citet{ep08} and \citet{bm10}, we view the entire problem
as a non-cooperative game and look for Nash equilibria for the game.
More specifically, we consider a game with one player for each time
$t$, where player $t$ can be regarded as the future incarnation
of the decision maker at time $t$. Given state $(t,x)$, player $t$
will choose a control action $\pi(t,x)$, and she/he wants to maximize
the functional $V^{\pi}(t,x)$. In the continuous-time model, \citet{el06}
and \citet{ep08} give the precise definition of this equilibrium
strategy for the first time. Intuitively, equilibrium strategies are
the strategies such that, given that they will be implemented in the
future, it is optimal to implement them right now. 
\begin{defn}
\label{def:2-2}Choose a control law $\hat{\pi}\in\Pi$, a fixed $l\in[0,M]$
and a fixed real number $\epsilon>0$. For any fixed initial point
$(t,x)\in[0,\infty)\times[0,\infty),$ we define the control law $\pi^{\epsilon}$
by 
\[
\pi^{\epsilon}(s,y)=\begin{cases}
0, & \mathrm{for}\ s\in[t,\infty),\ y=0;\\
l, & \mathrm{for}\ s\in[t,t+\epsilon],\ y\in(0,\infty);\\
\hat{\pi}(s,y), & \mathrm{for}\ s\in[t+\epsilon,\infty),\ y\in(0,\infty).
\end{cases}
\]
If 
\[
\liminf_{\epsilon\rightarrow0}\frac{V^{\hat{\pi}}(t,x)-V^{\pi^{\epsilon}}(t,x)}{\epsilon}\geq0,
\]
for all $l\in[0,M],$ we say that $\hat{\pi}$ is an equilibrium control
law. And the equilibrium value function $V$ is defined by 
\begin{equation}
V(t,x)=V^{\hat{\pi}}(t,x).\label{eq:2-2}
\end{equation}

\end{defn}
In the following section, we will first give the equilibrium HJB-equation
for the equilibrium value function $V$, and then prove a verification
theorem.

\section{The equilibrium Hamilton-Jacobi-Bellman Equation}

In this section, we consider the objective function having the form
\begin{equation}
V^{\pi}(t,x)=\ep_{t,x}\left[\int_{t}^{\tau_{t}^{\pi}}C\left(t,s,\pi(s,X_{s}^{\pi})\right)\dt s\right],\label{eq:3-1}
\end{equation}
where $C\left(t,s,\pi(s,X_{s}^{\pi})\right)=h(s-t)\pi(s,X_{s}^{\pi}),$
for $s\geq t$.

For all $\pi\in\Pi$ and any real valued function $f(t,x)\in C^{1,2}\left([0,\infty)\times(0,\infty)\right)$,
which means that the partial derivatives $\frac{\partial f}{\partial t},\ \frac{\partial f}{\partial x},\ \frac{\partial^{2}f}{\partial x^{2}}$
exist and are continuous on $[0,\infty)\times(0,\infty)$, we define
the infinitesimal generator $\mcl L^{\pi}$ by
\[
\mcl L^{\pi}f(t,x)=\frac{\partial f}{\partial t}(t,x)+\left(\mu-\pi(t,x)\right)\frac{\partial f}{\partial x}(t,x)+\frac{1}{2}\sigma^{2}\frac{\partial^{2}f}{\partial x^{2}}(t,x).
\]

Let $\mcl D[0,\infty):=\left\{ (s,t)\;\mid\;0\leq s\leq t<\infty\right\} $.
The following equilibrium HJB-equation is motivated by Equation (4.77)
of \citet{y12} and the proof of Theorem \ref{thm:3-2}.
\begin{defn}
\label{def:3-1}For a smooth function $c(s,t,x)$ defined on $\mcl D[0,\infty)\times[0,\infty)$,
the equilibrium HJB-equation is given by
\begin{equation}
\begin{cases}
\frac{\partial c}{\partial t}(s,t,x)+H\left(s,t,\phi\left(t,t,\frac{\partial c}{\partial x}(t,t,x),\frac{\partial^{2}c}{\partial x^{2}}(t,t,x)\right),\frac{\partial c}{\partial x}(s,t,x),\frac{\partial^{2}c}{\partial x^{2}}(s,t,x)\right)=0,\\
\qq\qq\qq\qq\qq\qq\qq\qq\qq\qq\forall(s,t,x)\in\mcl D[0,\infty)\times(0,\infty),\\
c(s,t,0)=0,\q\forall(s,t)\in\mcl D[0,\infty),
\end{cases}\label{eq:3-2}
\end{equation}
where
\begin{equation}
\begin{cases}
H(s,t,l,p,P)=\frac{1}{2}\sigma^{2}P+(\mu-l)p+C(s,t,l),\\
\phi(s,t,p,P)=\arg\max H(s,t,\cdot,p,P),
\end{cases}\label{eq:3-3}
\end{equation}
for $(s,t,l,p,P)\in\mcl D[0,\infty)\times[0,M]\times\mbb R^{2}.$ 
\end{defn}
Since the equilibrium HJB-equation given in Definition \ref{def:3-1}
is informal, we are now giving a strict verification theorem. 
\begin{thm}
\label{thm:3-2}(Verification Theorem) Assume that there exists a
bounded function $c(s,t,x)$, which is smooth enough, solves the equilibrium
HJB-equation in Definition \ref{def:3-1}. Let 
\begin{equation}
\hat{\pi}(t,x):=\phi\left(t,t,\frac{\partial c}{\partial x}(t,t,x),\frac{\partial^{2}c}{\partial x^{2}}(t,t,x)\right)\label{eq:3-5}
\end{equation}
and 
\begin{equation}
V(t,x):=c(t,t,x).\label{eq:3-5-1}
\end{equation}
If for any $(s,t,x)\in\mcl D[0,\infty)\times[0,\infty)$ it holds
that
\begin{equation}
\lim_{n\tto\infty}c(s,\tau_{n},X_{\tau_{n}}^{\hat{\pi}})=0,\q a.s.,\label{eq:3-4}
\end{equation}
where $\tau_{n}=n\wedge\tau_{t}^{\hat{\pi}},n\geq t,\; n=1,2,\cdots,$
and $X^{\hat{\pi}}$ is the unique solution to the SDE (\ref{eq:2-0})
with $\pi$ replaced by $\hat{\pi}$ and initial state $(t,x)$, then
$\hat{\pi}$ given by (\ref{eq:3-5}) is an equilibrium control law,
and $V$ given by (\ref{eq:3-5-1}) is the corresponding equilibrium
value function. \end{thm}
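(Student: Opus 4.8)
The plan is to verify both assertions directly from Definition~\ref{def:2-2}. First I would fix an arbitrary initial point $(t,x)$ and an arbitrary $l\in[0,M]$, construct the spike-perturbation $\pi^{\epsilon}$ as in Definition~\ref{def:2-2}, and compute $V^{\pi^{\epsilon}}(t,x)$ by conditioning on what happens on $[t,t+\epsilon]$. The key identity is that, because $\pi^{\epsilon}$ agrees with $\hat\pi$ on $[t+\epsilon,\infty)$, one can write
\[
V^{\pi^{\epsilon}}(t,x)=\ep_{t,x}\!\left[\int_t^{(t+\epsilon)\wedge\tau}h(s-t)\,l\,\dt s\right]
+\ep_{t,x}\!\left[\id_{\{\tau>t+\epsilon\}}\,c\big(t,t+\epsilon,X_{t+\epsilon}^{\pi^{\epsilon}}\big)\right],
\]
where $\tau=\tau_t^{\pi^{\epsilon}}$ and I use $V^{\hat\pi}(t+\epsilon,y)=c(t+\epsilon,t+\epsilon,y)$. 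To handle the second term I would introduce the function $s\mapsto c(t,s,\cdot)$ with the \emph{first} argument frozen at $t$, apply It\^o's formula to $c(t,\,\cdot\,,X_{\cdot}^{\pi^{\epsilon}})$ on the stochastic interval $[t,(t+\epsilon)\wedge\tau]$, use the localizing sequence $\tau_n$ together with boundedness of $c$ and the terminal condition $c(t,s,0)=0$ and the transversality hypothesis \eqref{eq:3-4} to pass to the limit, and thereby replace $\ep_{t,x}[c(t,t+\epsilon,X^{\pi^{\epsilon}}_{t+\epsilon})]$ by $c(t,t,x)$ plus a $\mcl L^{\pi^{\epsilon}}$-drift integral; the martingale part vanishes in expectation.

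Combining these, and recalling that on $[t,t+\epsilon]$ the control used is the constant $l$ on $\{y>0\}$ while $\hat\pi(t,0)=0$, I would obtain
\[
V^{\hat\pi}(t,x)-V^{\pi^{\epsilon}}(t,x)
= c(t,t,x)-V^{\pi^{\epsilon}}(t,x)
= -\,\ep_{t,x}\!\left[\int_t^{(t+\epsilon)\wedge\tau}\Big(\tfrac{\partial c}{\partial s}(t,s,X_s)+H\big(t,s,l,\tfrac{\partial c}{\partial x},\tfrac{\partial^2 c}{\partial x^2}\big)\Big)\dt s\right]+o(\epsilon),
\]
where all derivatives of $c$ in the integrand are evaluated at $(t,s,X_s^{\pi^{\epsilon}})$. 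Dividing by $\epsilon$ and letting $\epsilon\downarrow0$, the mean-value / dominated-convergence argument reduces the liminf to the single value of the integrand at $s=t$, $X_s=x$, namely
\[
-\Big(\tfrac{\partial c}{\partial t}(t,t,x)+H\big(t,t,l,\tfrac{\partial c}{\partial x}(t,t,x),\tfrac{\partial^2 c}{\partial x^2}(t,t,x)\big)\Big).
\]
The equilibrium HJB-equation \eqref{eq:3-2}, evaluated on the diagonal $s=t$, states exactly that $\tfrac{\partial c}{\partial t}(t,t,x)+H\big(t,t,\hat\pi(t,x),\tfrac{\partial c}{\partial x}(t,t,x),\tfrac{\partial^2 c}{\partial x^2}(t,t,x)\big)=0$; since $\hat\pi(t,x)=\phi(t,t,\cdot,\cdot)$ is by \eqref{eq:3-3} the maximizer of $l\mapsto H(t,t,l,\cdot,\cdot)$, for every $l\in[0,M]$ we get $H(t,t,l,\cdot,\cdot)\le H(t,t,\hat\pi(t,x),\cdot,\cdot)=-\tfrac{\partial c}{\partial t}(t,t,x)$, so the displayed limit is $\ge0$. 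This is precisely the inequality required in Definition~\ref{def:2-2}, proving $\hat\pi$ is an equilibrium control law; and $V(t,x)=c(t,t,x)=V^{\hat\pi}(t,x)$ is then the equilibrium value function by \eqref{eq:2-2}.

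The main obstacle, and the step I would write most carefully, is the justification of the limit interchange: controlling the $o(\epsilon)$ remainder uniformly and showing $\frac1\epsilon\ep_{t,x}[\int_t^{(t+\epsilon)\wedge\tau}g(s,X_s)\dt s]\to g(t,x)$ for the continuous bounded integrand $g$. This needs right-continuity of $s\mapsto X_s^{\pi^{\epsilon}}$ at $s=t$, the fact that $\pr_{t,x}(\tau\le t+\epsilon)\to0$ for $x>0$ (and the separate, trivial case $x=0$ where $V\equiv0$), and a dominating bound coming from boundedness of $c$, its first two $x$-derivatives, and $h$ on compacts; the transversality condition \eqref{eq:3-4} together with $\sup|c|<\infty$ is what licenses dropping the contribution at the localization times $\tau_n$ when applying It\^o's formula. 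Once these estimates are in place the algebra above is routine.
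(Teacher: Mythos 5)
Your proposal is correct in substance but takes a genuinely different route from the paper's Step 2. The paper splits $V^{\hat{\pi}}(t,x)-V^{\pi^{\epsilon}}(t,x)$ into a short-interval dividend term, a difference of value functions at $t+\epsilon$ along the two paths (each handled by It\^o's formula), and two explicit ``discount-shift'' correction terms involving $h(s-t)-h(s-t-\epsilon)$ over $[t+\epsilon,\tau)$, the last of which is only bounded below and killed in the limit by dominated convergence. You instead condition at $t+\epsilon$ and identify the continuation value as the \emph{off-diagonal} quantity $\ep_{t,x}\bigl[\id_{\{\tau>t+\epsilon\}}c(t,t+\epsilon,X_{t+\epsilon}^{\pi^{\epsilon}})\bigr]$, so the discount-shift terms never appear; a single application of It\^o's formula to $c(t,\cdot,\cdot)$ on $[t,(t+\epsilon)\wedge\tau]$ then yields an exact identity up to $o(\epsilon)$, and the sign comes purely from the maximality of $\phi$ on the diagonal $s=t$ of the HJB-equation. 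This is cleaner and closer to the standard extended-HJB verification arguments; what the paper's decomposition buys is that it never needs the probabilistic representation of $c(t,u,y)$ for $u>t$ inside the perturbation argument (it is only used to identify $V=V^{\hat{\pi}}$ in Step 1).

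One point you should tighten: the identity you invoke, ``$V^{\hat{\pi}}(t+\epsilon,y)=c(t+\epsilon,t+\epsilon,y)$,'' is the diagonal representation and is \emph{not} what justifies the continuation term in your display. Player $t$ discounts with $h(\cdot-t)$, so the continuation value after $t+\epsilon$ is $\ep_{t+\epsilon,y}\bigl[\int_{t+\epsilon}^{\tau}h(z-t)\hat{\pi}(z,X_{z}^{\hat{\pi}})\dt z\bigr]$, and you need this to equal $c(t,t+\epsilon,y)$ --- the off-diagonal representation. That identity does not come for free from the PDE; it requires applying Dynkin's formula to $c(t,\cdot,\cdot)$ along the $\hat{\pi}$-controlled path on $[t+\epsilon,\tau)$, together with boundedness of $c$, the boundary condition $c(t,\cdot,0)=0$, and the transversality hypothesis (\ref{eq:3-4}) to discard the terms at the localization times $\tau_{n}$. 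This is exactly the paper's Step 1 (its equation for $c(s,t,x)$ as an expected discounted dividend stream), so the gap is filled by an argument you already gesture at, but as written your plan conflates the short-interval It\^o computation under the constant control $l$ with the separate Dynkin argument under $\hat{\pi}$ needed to legitimize the conditioning step.
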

\begin{proof}
We give the proof in two steps: 1. We show that $V$ is the value
function corresponding to $\hat{\pi}$, i.e., $V(t,x)=V^{\hat{\pi}}(t,x)$;
2. We prove that $\hat{\pi}$ is indeed the equilibrium control law
which is defined by Definition \ref{def:2-2}.

Step 1. 

With (\ref{eq:3-5}), we rewrite (\ref{eq:3-2}) as 
\begin{equation}
\begin{cases}
\mcl L^{\hat{\pi}}c(s,t,x)+C(s,t,\hat{\pi}(t,x))=0, & (s,t,x)\in\mcl D[0,\infty)\times(0,\infty),\\
c(s,t,0)=0,\q\forall(s,t)\in\mcl D[0,\infty),
\end{cases}\label{eq:3-6}
\end{equation}
where the operator $\mcl L^{\hat{\pi}}$ applies to the function $c(s,\cdot,\cdot).$

By (\ref{eq:3-6}), applying Dynkin's formula to the function $c(s,\cdot,\cdot)$
yields that 
\begin{eqnarray*}
c(s,t,x) & = & \ep_{t,x}\left[c\left(s,\tau_{n},X_{\tau_{n}}^{\hat{\pi}}\right)\right]-\ep_{t,x}\left[\int_{t}^{\tau_{n}}\mcl L^{\hat{\pi}}c\left(s,z,X_{z}^{\hat{\pi}}\right)\dt z\right]\\
 & = & \ep_{t,x}\left[c\left(s,\tau_{n},X_{\tau_{n}}^{\hat{\pi}}\right)\right]+\ep_{t,x}\left[\int_{t}^{\tau_{n}}C(s,z,\hat{\pi}(z,X_{z}^{\hat{\pi}}))\dt z\right].
\end{eqnarray*}
Recalling Definition \ref{def:2-1} of admissible strategies (see
also (\ref{eq:a-0})), for given $s\leq t$, we have 
\[
\ep_{t,x}\left[\int_{t}^{\tau_{t}^{\hat{\pi}}}\left|C\left(s,z,\hat{\pi}(z,X_{z}^{\hat{\pi}})\right)\right|\dt z\right]<\infty,\q\forall(t,x)\in[0,\infty)\times[0,\infty).
\]
Since $c(\cdot,\cdot,\cdot)$ is bounded, by (\ref{eq:3-4}), letting
$n\tto\infty$ and applying dominated convergence theorem yield that
\begin{equation}
c(s,t,x)=\ep_{t,x}\left[\int_{t}^{\tau_{t}^{\hat{\pi}}}h(z-s)\hat{\pi}(z,X_{z}^{\hat{\pi}})\dt z\right],\q(s,t,x)\in\mcl D[0,\infty)\times[0,\infty).\label{eq:4-1}
\end{equation}
Thus, we have 
\[
V(t,x):=c(t,t,x)=V^{\hat{\pi}}(t,x).
\]

Step 2. For a given $l\in[0,M]$, and a fixed real number $\epsilon>0$,
we define $\pi^{\epsilon}$ by Definition \ref{def:2-2}. For simplicity,
we denote by $X^{\epsilon}$ the path under the control law $\pi^{\epsilon}$.
Without loss of generality, we consider the case where $\epsilon$
is sufficient small such that $t+\epsilon<\tau_{t}^{\pi^{\epsilon}}\wedge\tau_{t}^{\hat{\pi}}$.
By the definition of $V^{\pi}$, we obtain 
\begin{eqnarray}
V^{\hat{\pi}}(t,x)-V^{\pi^{\epsilon}}(t,x) & = & \ep_{t,x}\left[\int_{t}^{\tau_{t}^{\hat{\pi}}}C\left(t,s,\hat{\pi}\left(s,X_{s}^{\hat{\pi}}\right)\right)\dt s-\int_{t}^{\tau_{t}^{\pi^{\epsilon}}}C\left(t,s,\pi^{\epsilon}\left(s,X_{s}^{\epsilon}\right)\right)\dt s\right]\nonumber \\
 & = & \ep_{t,x}\left[\int_{t}^{t+\epsilon}h(s-t)\left(\hat{\pi}\left(s,X_{s}^{\hat{\pi}}\right)-\pi^{\epsilon}\left(s,X_{s}^{\epsilon}\right)\right)\dt s\right]\nonumber \\
 &  & +\ep_{t,x}\left[V^{\hat{\pi}}\left(t+\epsilon,X_{t+\epsilon}^{\hat{\pi}}\right)-V^{\hat{\pi}}\left(t+\epsilon,X_{t+\epsilon}^{\epsilon}\right)\right]\nonumber \\
 &  & +\ep_{t,x}\left[\int_{t+\epsilon}^{\tau_{t}^{\hat{\pi}}}\left(h\left(s-t\right)-h\left(s-t-\epsilon\right)\right)\hat{\pi}\left(s,X_{s}^{\hat{\pi}}\right)\dt s\right]\nonumber \\
 &  & -\ep_{t,x}\left[\int_{t+\epsilon}^{\tau_{t}^{\pi^{\epsilon}}}\left(h\left(s-t\right)-h\left(s-t-\epsilon\right)\right)\hat{\pi}\left(s,X_{s}^{\epsilon}\right)\dt s\right].\label{eq:4-5}
\end{eqnarray}
Here $\hat{\pi}(s,X_{s}^{\epsilon})$ and $\hat{\pi}(s,X_{s}^{\hat{\pi}})$
are the equilibrium control processes associated with the paths of
$X^{\epsilon}$ and $X^{\hat{\pi}}$, respectively.

According to the equation (\ref{eq:4-5}), we now consider the limitation
$\lim_{\epsilon\rightarrow0}\frac{V^{\hat{\pi}}(t,x)-V^{\pi^{\epsilon}}(t,x)}{\epsilon}$
in three parts separately:

1. Noting that $\int_{0}^{\infty}h(t)\dt t<\infty$, $l$ and $\hat{\pi}$
are bounded and applying the dominated convergence theorem, we get
\begin{align*}
\lim_{\epsilon\rightarrow0}\frac{\ep_{t,x}\left[\int_{t}^{t+\epsilon}h(s-t)\left(\hat{\pi}\left(s,X_{s}^{\hat{\pi}}\right)-\pi^{\epsilon}\left(s,X_{s}^{\epsilon}\right)\right)\dt s\right]}{\epsilon} & =\hat{\pi}\left(t,x\right)-\pi^{\epsilon}(t,x).
\end{align*}

2. We rewrite the second part in the right-side of the equation (\ref{eq:4-5})
by 
\begin{eqnarray*}
 &  & \ep_{t,x}\left[V^{\hat{\pi}}\left(t+\epsilon,X_{t+\epsilon}^{\hat{\pi}}\right)-V^{\hat{\pi}}\left(t+\epsilon,X_{t+\epsilon}^{\epsilon}\right)\right]\\
 & = & \ep_{t,x}\left[V^{\hat{\pi}}\left(t+\epsilon,X_{t+\epsilon}^{\hat{\pi}}\right)-V^{\hat{\pi}}\left(t,x\right)\right]-\ep_{t,x}\left[V^{\hat{\pi}}\left(t+\epsilon,X_{t+\epsilon}^{\epsilon}\right)-V^{\hat{\pi}}\left(t,x\right)\right]\\
 & = & \ep_{t,x}\left[\int_{t}^{t+\epsilon}\dt V^{\hat{\pi}}\left(u,X_{u}^{\hat{\pi}}\right)\right]-\ep_{t,x}\left[\int_{t}^{t+\epsilon}\dt V^{\hat{\pi}}\left(u,X_{u}^{\epsilon}\right)\right].
\end{eqnarray*}
Applying the Itô formula, we get 
\begin{eqnarray*}
 &  & \lim_{\epsilon\rightarrow0}\frac{\ep_{t,x}\left[\int_{t}^{t+\epsilon}\dt V^{\hat{\pi}}\left(u,X_{u}^{\hat{\pi}}\right)\right]}{\epsilon}\\
 & = & \frac{\partial V^{\hat{\pi}}(t,x)}{\partial t}+\left(\mu-\hat{\pi}\left(t,x\right)\right)\frac{\partial V^{\hat{\pi}}(t,x)}{\partial x}+\frac{1}{2}\sigma^{2}\frac{\partial^{2}V^{\hat{\pi}}(t,x)}{\partial x^{2}}\\
 & = & \left(\mcl L^{\hat{\pi}}V^{\hat{\pi}}\right)\left(t,x\right)\\
 & = & \left(\mcl L^{\hat{\pi}}V\right)\left(t,x\right),
\end{eqnarray*}
and
\begin{eqnarray*}
 &  & \lim_{\epsilon\rightarrow0}\frac{\ep_{t,x}\left[\int_{t}^{t+\epsilon}\dt V^{\hat{\pi}}\left(u,X_{u}^{\epsilon}\right)\right]}{\epsilon}\\
 & = & \frac{\partial V^{\hat{\pi}}(t,x)}{\partial t}+\left(\mu-l\right)\frac{\partial V^{\hat{\pi}}(t,x)}{\partial x}+\frac{1}{2}\sigma^{2}\frac{\partial^{2}V^{\hat{\pi}}(t,x)}{\partial x^{2}}\\
 & = & \left(\mcl L^{\pi^{\epsilon}}V^{\hat{\pi}}\right)\left(t,x\right)\\
 & = & \left(\mcl L^{\pi^{\epsilon}}V\right)\left(t,x\right).
\end{eqnarray*}

3. Considering the cases with $\tau_{t}^{\hat{\pi}}\geq\tau_{t}^{\pi^{\epsilon}}$
and $\tau_{t}^{\hat{\pi}}\leq\tau_{t}^{\pi^{\epsilon}}$and noting
that $\hat{\pi}\left(s,X_{s}^{\epsilon}\right)\equiv0$ for $s\geq\tau_{t}^{\pi^{\epsilon}}$,
we have
\begin{align*}
 & \ep_{t,x}\left[\int_{t+\epsilon}^{\tau_{t}^{\hat{\pi}}}\left(h\left(s-t\right)-h\left(s-t-\epsilon\right)\right)\hat{\pi}\left(s,X_{s}^{\hat{\pi}}\right)\dt s\right]\\
 & -\ep_{t,x}\left[\int_{t+\epsilon}^{\tau_{t}^{\pi^{\epsilon}}}\left(h\left(s-t\right)-h\left(s-t-\epsilon\right)\right)\hat{\pi}\left(s,X_{s}^{\epsilon}\right)\dt s\right]\\
\geq & \ep_{t,x}\left[\int_{t+\epsilon}^{\tau_{t}^{\hat{\pi}}}\left(h\left(s-t\right)-h\left(s-t-\epsilon\right)\right)\left[\hat{\pi}\left(s,X_{s}^{\hat{\pi}}\right)-\hat{\pi}\left(s,X_{s}^{\epsilon}\right)\right]\dt s\right].
\end{align*}
Noting that $\hat{\pi}$ is bounded and $\int_{0}^{\infty}h(s)\dt s<\infty,$
by the dominated convergence theorem, we get 
\[
\lim_{\epsilon\rightarrow0}\frac{\ep_{t,x}\left[\int_{t+\epsilon}^{\tau_{t}^{\hat{\pi}}}\left[h\left(s-t\right)-h\left(s-t-\epsilon\right)\right]\left(\hat{\pi}\left(s,X_{s}^{\hat{\pi}}\right)-\hat{\pi}\left(s,X_{s}^{\epsilon}\right)\right)\dt s\right]}{\epsilon}=0.
\]

Therefore, we obtain
\begin{align}
\lim_{\epsilon\rightarrow0}\frac{V^{\hat{\pi}}(t,x)-V^{\pi^{\epsilon}}(t,x)}{\epsilon} & \geq\left[\mcl L^{\hat{\pi}}V\left(t,x\right)+C\left(t,t,\hat{\pi}(t,x)\right)\right]-\left[\mcl L^{\pi^{\epsilon}}V\left(t,x\right)+C\left(t,t,\pi^{\epsilon}\left(t,x\right)\right)\right].\label{eq:4-3}
\end{align}
It follows from (\ref{eq:3-3}) and (\ref{eq:3-5}) that 
\begin{equation}
\left(\mcl L^{\hat{\pi}}V\right)(t,x)+C\left(t,t,\hat{\pi}(t,x)\right)=\sup_{\pi\in\Pi}\left\{ \left(\mcl L^{\pi}V\right)(t,x)+C\left(t,t,\pi(t,x)\right)\right\} .\label{eq:4-4}
\end{equation}
Therefore, (\ref{eq:4-3}) and (\ref{eq:4-4}) imply that 
\begin{eqnarray*}
\lim_{\epsilon\rightarrow0}\frac{V^{\hat{\pi}}(t,x)-V^{\pi^{\epsilon}}(t,x)}{\epsilon} & \geq & 0.
\end{eqnarray*}
This completes the proof.
\end{proof}

\section{\label{sec:5}Solutions to Two Special Cases }

In this section, we try to find a solution of the equilibrium HJB-equation
in Definition \ref{def:3-1} for specific discount functions. First
of all, we make a conjecture of equilibrium strategy for a general
discount function. Since 
\begin{eqnarray*}
H(s,t,l,p,P) & = & \frac{1}{2}\sigma^{2}P+(\mu-l)p+C(s,t,l)\\
 & = & \frac{1}{2}\sigma^{2}P+\mu p+[h(t-s)-p]l,
\end{eqnarray*}
we have
\[
\phi(s,t,p,P)=\begin{cases}
0, & \text{if }p\geq h(t-s),\\
M, & \text{if }p<h(t-s).
\end{cases}
\]

We assume that there exists a constant $b\geq0$ such that $\frac{\partial c}{\partial x}(t,t,x)\geq1,$
if $0\leq x<b,$ and $\frac{\partial c}{\partial x}(t,t,x)<1$, if
$x\geq b$. Thus, the equilibrium strategy is given by 
\begin{equation}
\hat{\pi}(t,x)=\phi\left(t,t,\frac{\partial c}{\partial x}(t,t,x),\frac{\partial^{2}c}{\partial x^{2}}(t,t,x)\right)=\begin{cases}
0,\q & \text{if }\;0\leq x<b,\\
M,\q & \text{if }\; x\geq b.
\end{cases}\label{eq:5-1}
\end{equation}
Then the equilibrium HJB-equation (\ref{eq:3-2}) becomes
\begin{equation}
\begin{cases}
\frac{\partial c}{\partial t}(s,t,x)+\frac{1}{2}\sigma^{2}\frac{\partial^{2}c}{\partial x^{2}}(s,t,x)+\mu\frac{\partial c}{\partial x}(s,t,x)=0, & (s,t,x)\in\mcl D[0,\infty)\times(0,b),\\
\frac{\partial c}{\partial t}(s,t,x)+\frac{1}{2}\sigma^{2}\frac{\partial^{2}c}{\partial x^{2}}(s,t,x)+(\mu-M)\frac{\partial c}{\partial x}(s,t,x)+h(t-s)M=0, & (s,t,x)\in\mcl D[0,\infty)\times[b,\infty),\\
c(s,t,0)=0,\q\forall(s,t)\in\mcl D[0,\infty).
\end{cases}\label{eq:5-2}
\end{equation}

\subsection{A Mixture of Exponential Discount Functions}

Let us consider a case where the dividends are proportionally paid
to $N$ inhomogenous shareholders. In terms of inhomogenous, we mean
that the shareholders have different discount rates. Then given a
control law $\pi$, the return function is 
\[
V^{\pi}(t,x)=\sum_{i=1}^{N}\ep_{t,x}\left[\int_{t}^{\tau_{t}^{\pi}}\omega_{i}e^{-\delta_{i}(z-t)}\pi(z,X_{z}^{\pi})\dt z\right],
\]
where $\omega_{i}>0$ satisfying $\sum_{i=1}^{N}\omega_{i}=1$ is
the proportion at which the dividends are paid to the shareholders,
$\delta_{i}>0$, $i=1,2,\cdots,N,$ are the constant discount rates
of the shareholders, respectively.

In fact, a mixture of exponential discount functions is used in the
above example. We consider a discount function defined by
\begin{equation}
h(t)=\sum_{i=1}^{N}\omega_{i}e^{-\delta_{i}t},\q t\geq0,\label{eq:5-1-1}
\end{equation}
where $\delta_{i}>0$, and $\omega_{i}>0$ satisfies $\sum_{i=1}^{N}\omega_{i}=1.$ 

We consider the following ansatz:
\begin{eqnarray}
c(s,t,x) & = & \sum_{i=1}^{N}\omega_{i}e^{-\delta_{i}(t-s)}V_{i}(x),\q(s,t,x)\in\mcl D[0,\infty)\times[0,\infty),\label{eq:5-1-2}
\end{eqnarray}
where the functions $V_{i}(x)$, $i=1,2,\cdots,N,$ are given by the
system of ODEs
\begin{equation}
\begin{cases}
\frac{1}{2}\sigma^{2}\frac{\partial^{2}V_{i}}{\partial x^{2}}(x)+\mu\frac{\partial V_{i}}{\partial x}(x)-\delta_{i}V_{i}(x)=0, & x\in[0,b),\\
\frac{1}{2}\sigma^{2}\frac{\partial^{2}V_{i}}{\partial x^{2}}(x)+\left(\mu-M\right)\frac{\partial V_{i}}{\partial x}(x)-\delta_{i}V_{i}(x)+M=0, & x\in[b,\infty),\\
V_{i}(0)=0.
\end{cases}\label{eq:5-1-3}
\end{equation}

Denote by $\theta_{1}(\eta,c)$ and $-\theta_{2}(\eta,c)$ the positive
and negative roots of the equation $\frac{1}{2}\sigma^{2}y^{2}+\eta y-c=0$,
respectively. Then
\[
\begin{cases}
\theta_{1}(\eta,c) & =\frac{-\eta+\sqrt{\eta^{2}+2\sigma^{2}c}}{\sigma^{2}},\\
\theta_{2}(\eta,c) & =\frac{\eta+\sqrt{\eta^{2}+2\sigma^{2}c}}{\sigma^{2}}.
\end{cases}
\]
Thus a general solution of the equation (\ref{eq:5-1-3}) has the
form

\begin{equation}
V_{i}(x)=\begin{cases}
C_{i1}e^{\theta_{1}(\mu,\delta_{i})x}+C_{i2}e^{-\theta_{2}(\mu,\delta_{i})x}, & x\in[0,b),\\
\frac{M}{\delta_{i}}+C_{i3}e^{\theta_{1}(\mu-M,\delta_{i})x}+C_{i4}e^{-\theta_{2}(\mu-M,\delta_{i})x}, & x\in[b,\infty),
\end{cases}\label{eq:5-1-4}
\end{equation}
for $i=1,2,\cdots,N$.

Since $V_{i}(0)=0,$ and $V_{i}(x)>0,$ for all $x>0$, we have $C_{i1}=-C_{i2}:=C_{i}>0$,
$i=1,2,\cdots,N$. Since we are looking for a bounded function $c(\cdot,\cdot,\cdot)$
(see Theorem \ref{thm:3-2}), we have $C_{i3}=0$, $i=1,2,\cdots,N.$
To simplify the notation, let $C_{i4}:=-d_{i},i=1,2,\cdots,N.$

Now to find the value of $C_{i},d_{i},i=1,2,\cdots,N$ and $b$, we
use ``the principle of smooth fit'' to get
\begin{equation}
\begin{cases}
V_{i}(b+) & =V_{i}(b-),\q i=1,2,\cdots,N,\\
V_{i}'(b+) & =V_{i}'(b-),\q i=1,2,\cdots,N,\\
\frac{\partial c}{\partial x}(t,t,b+) & =1\ \left(\mathrm{or}\ \mathrm{equivalently},\ \frac{\partial c}{\partial x}(t,t,b-)=1\right).
\end{cases}\label{eq:5-1-5}
\end{equation}
Therefore by denoting 
\begin{align*}
\theta_{i1}=\theta_{1}(\mu,\delta_{i}),\ \theta_{i2}=\theta_{2}(\mu,\delta_{i}),\  & \theta_{i3}=\theta_{2}(\mu-M,\delta_{i}),\q i=1,2,\cdots,N,
\end{align*}
we can rewrite (\ref{eq:5-1-5}) as for $i=1,2,\cdots,N,$ 
\begin{eqnarray}
C_{i}\left(e^{\theta_{i1}b}-e^{-\theta_{i2}b}\right) & = & \frac{M}{\delta_{i}}-d_{i}e^{-\theta_{i3}b},\label{eq:5-1-6}\\
C_{i}\left(\theta_{i1}e^{\theta_{i1}b}+\theta_{i2}e^{-\theta_{i2}b}\right) & = & d_{i}\theta_{i3}e^{-\theta_{i3}b},\label{eq:5-1-7}
\end{eqnarray}
and
\begin{equation}
\sum_{i=1}^{N}\omega_{i}C_{i}\left(\theta_{i1}e^{\theta_{i1}b}+\theta_{i2}e^{-\theta_{i2}b}\right)=1.\label{eq:5-1-8}
\end{equation}
From (\ref{eq:5-1-6}) - (\ref{eq:5-1-7}) we can get $C_{i}$ and
$d_{i}$ in the expression of $b$:
\begin{align}
C_{i} & =\frac{M\theta_{i3}}{\delta_{i}}\left[\left(\theta_{i1}+\theta_{i3}\right)e^{\theta_{i1}b}+\left(\theta_{i2}-\theta_{i3}\right)e^{-\theta_{i2}b}\right]^{-1},\label{eq:5-1-9}\\
d_{i} & =\frac{M}{\delta_{i}}e^{\theta_{i3}b}\frac{\theta_{i1}e^{\theta_{i1}b}+\theta_{i2}e^{-\theta_{i2}b}}{\left(\theta_{i1}+\theta_{i3}\right)e^{\theta_{i1}b}+\left(\theta_{i2}-\theta_{i3}\right)e^{-\theta_{i2}b}},\label{eq:5-1-10}
\end{align}
for $i=1,2,\cdots,N$.

Substituting $C_{i}$ into (\ref{eq:5-1-8}), we obtain
\[
\sum_{i=1}^{N}\omega_{i}\frac{M\theta_{i3}}{\delta_{i}}\frac{\theta_{i1}e^{\theta_{i1}b}+\theta_{i2}e^{-\theta_{i2}b}}{\left(\theta_{i1}+\theta_{i3}\right)e^{\theta_{i1}b}+\left(\theta_{i2}-\theta_{i3}\right)e^{-\theta_{i2}b}}=1.
\]
Let 
\[
F(b):=\sum_{i=1}^{N}\omega_{i}\frac{M\theta_{i3}}{\delta_{i}}\frac{\theta_{i1}e^{\theta_{i1}b}+\theta_{i2}e^{-\theta_{i2}b}}{\left(\theta_{i1}+\theta_{i3}\right)e^{\theta_{i1}b}+\left(\theta_{i2}-\theta_{i3}\right)e^{-\theta_{i2}b}}-1.
\]

\begin{lem}
If $\sum_{i=1}^{N}\omega_{i}\frac{M\theta_{i3}}{\delta_{i}}>1$, then
$F(b)=0$ has a unique positive solution.\end{lem}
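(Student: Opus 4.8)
The plan is to show that $F$ is continuous and \emph{strictly decreasing} on $[0,\infty)$, with $F(0)>0$ (which is exactly the stated hypothesis) and $\lim_{b\to\infty}F(b)<0$; existence and uniqueness of a positive zero of $F$ then follow immediately from continuity, strict monotonicity, and the intermediate value theorem.

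First I would record the defining relations of the exponents. Since $\theta_{i1}=\theta_1(\mu,\delta_i)$ is the positive root of $\frac{1}{2}\sigma^2y^2+\mu y-\delta_i=0$, we get $\delta_i=\theta_{i1}\bigl(\tfrac{1}{2}\sigma^2\theta_{i1}+\mu\bigr)$; and since $\theta_{i3}=\theta_2(\mu-M,\delta_i)$ is such that $-\theta_{i3}$ is the negative root of $\frac{1}{2}\sigma^2y^2+(\mu-M)y-\delta_i=0$, we get $\delta_i=\theta_{i3}\bigl(\tfrac{1}{2}\sigma^2\theta_{i3}-\mu+M\bigr)$. Evaluating $F$ at $b=0$, the exponential factors become $1$ and $\theta_{i3}$ cancels in the denominator, so the $i$-th summand collapses to $\frac{M\theta_{i3}}{\delta_i}\cdot\frac{\theta_{i1}+\theta_{i2}}{\theta_{i1}+\theta_{i2}}=\frac{M\theta_{i3}}{\delta_i}$, hence $F(0)=\sum_i\omega_i\frac{M\theta_{i3}}{\delta_i}-1>0$ by hypothesis.

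Next I would establish strict monotonicity of each term $g_i(b):=\frac{M\theta_{i3}}{\delta_i}\,\psi_i(b)$, where $\psi_i(b)=\dfrac{\theta_{i1}e^{\theta_{i1}b}+\theta_{i2}e^{-\theta_{i2}b}}{(\theta_{i1}+\theta_{i3})e^{\theta_{i1}b}+(\theta_{i2}-\theta_{i3})e^{-\theta_{i2}b}}$. Dividing numerator and denominator by $e^{\theta_{i1}b}$ and setting $u:=e^{-(\theta_{i1}+\theta_{i2})b}$, which is strictly decreasing in $b$ with $u(0)=1$, $u(\infty)=0$, one gets $\psi_i=\dfrac{\theta_{i1}+\theta_{i2}u}{(\theta_{i1}+\theta_{i3})+(\theta_{i2}-\theta_{i3})u}$; the denominator is affine in $u$ and equals $\theta_{i1}+\theta_{i3}>0$ at $u=0$ and $\theta_{i1}+\theta_{i2}>0$ at $u=1$, hence stays positive on $[0,1]$. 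A short quotient-rule computation gives that $\frac{d\psi_i}{du}$ has numerator $\theta_{i2}(\theta_{i1}+\theta_{i3})-\theta_{i1}(\theta_{i2}-\theta_{i3})=\theta_{i3}(\theta_{i1}+\theta_{i2})>0$, so $\psi_i$ is strictly increasing in $u$ and therefore strictly decreasing in $b$. Consequently $F=\sum_i\omega_i g_i-1$ is continuous and strictly decreasing on $[0,\infty)$.

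Finally, letting $u\to0^+$ yields $\displaystyle\lim_{b\to\infty}F(b)=\sum_i\omega_i\frac{M\theta_{i1}\theta_{i3}}{\delta_i(\theta_{i1}+\theta_{i3})}-1$, and I would show this is negative using the identity
\[
\delta_i(\theta_{i1}+\theta_{i3})=\delta_i\theta_{i1}+\delta_i\theta_{i3}=\theta_{i1}\theta_{i3}\Bigl(\tfrac{1}{2}\sigma^2(\theta_{i1}+\theta_{i3})+M\Bigr)>M\theta_{i1}\theta_{i3},
\]
which follows by inserting $\delta_i=\theta_{i1}(\tfrac12\sigma^2\theta_{i1}+\mu)$ into the first term and $\delta_i=\theta_{i3}(\tfrac12\sigma^2\theta_{i3}-\mu+M)$ into the second. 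Thus $\frac{M\theta_{i1}\theta_{i3}}{\delta_i(\theta_{i1}+\theta_{i3})}<1$ for every $i$, whence $\lim_{b\to\infty}F(b)<\sum_i\omega_i-1=0$. Combining $F(0)>0$, $\lim_{b\to\infty}F(b)<0$, continuity and strict monotonicity gives a unique $b>0$ with $F(b)=0$. The one genuinely non-mechanical step is spotting the displayed identity (equivalently, that $M\theta_{i1}\theta_{i3}<\delta_i(\theta_{i1}+\theta_{i3})$ holds unconditionally); the rest is routine bookkeeping with the quadratic relations.
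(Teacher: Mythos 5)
Your proposal is correct and follows the same global strategy as the paper (show $F(0)>0$, $F(+\infty)<0$, and $F$ strictly decreasing, then apply the intermediate value theorem), but it differs in two details of execution. For monotonicity, the paper differentiates $F$ directly and simplifies the resulting numerator $\Delta_i$ down to $-\theta_{i3}(\theta_{i1}+\theta_{i2})^2e^{(\theta_{i1}-\theta_{i2})b}<0$; your substitution $u=e^{-(\theta_{i1}+\theta_{i2})b}$ turns each summand into a fractional-linear function of $u$ whose monotonicity is read off from the single determinant $\theta_{i2}(\theta_{i1}+\theta_{i3})-\theta_{i1}(\theta_{i2}-\theta_{i3})=\theta_{i3}(\theta_{i1}+\theta_{i2})>0$ — the same quantity, reached with less algebra, and with the added (correct) observation that the denominator stays positive on $[0,1]$. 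For the sign of $F(+\infty)$, the paper simply cites Lemma 2.1 of Asmussen and Taksar for $\frac{M}{\delta_i}-\frac{1}{\theta_{i3}}-\frac{1}{\theta_{i1}}<0$, whereas you prove the equivalent inequality $M\theta_{i1}\theta_{i3}<\delta_i(\theta_{i1}+\theta_{i3})$ from the defining quadratics, making the lemma self-contained. One small slip: to obtain the displayed identity $\delta_i(\theta_{i1}+\theta_{i3})=\theta_{i1}\theta_{i3}\bigl(\tfrac{1}{2}\sigma^2(\theta_{i1}+\theta_{i3})+M\bigr)$ you must substitute $\delta_i=\theta_{i3}\bigl(\tfrac{1}{2}\sigma^2\theta_{i3}-\mu+M\bigr)$ into the term $\delta_i\theta_{i1}$ and $\delta_i=\theta_{i1}\bigl(\tfrac{1}{2}\sigma^2\theta_{i1}+\mu\bigr)$ into the term $\delta_i\theta_{i3}$ (the opposite of what you wrote); the identity itself, and hence the conclusion, is nevertheless correct.
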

\begin{proof}
The condition $\sum_{i=1}^{N}\omega_{i}\frac{M\theta_{i3}}{\delta_{i}}>1$
implies that $F(0)>0.$ From Lemma 2.1 of \citet{at97}, we know that
\[
\frac{M}{\delta_{i}}-\frac{1}{\theta_{i3}}-\frac{1}{\theta_{i1}}<0,\q i=1,2,\cdots,N.
\]
Thus, 
\begin{eqnarray*}
F(+\infty) & = & \sum_{i=1}^{N}\omega_{i}\frac{M\theta_{i3}}{\delta_{i}}\frac{\theta_{i1}}{\theta_{i1}+\theta_{i3}}-1\\
 & = & \sum_{i=1}^{N}\omega_{i}\left(\frac{M\theta_{i3}}{\delta_{i}}\frac{\theta_{i1}}{\theta_{i1}+\theta_{i3}}-1\right)\\
 & = & \sum_{i=1}^{N}\omega_{i}\frac{\theta_{i1}\theta_{i3}}{\theta_{i1}+\theta_{i3}}\left(\frac{M}{\delta_{i}}-\frac{1}{\theta_{i3}}-\frac{1}{\theta_{i1}}\right)\\
 & < & 0.
\end{eqnarray*}
Furthermore, we have
\[
F'(b)=\sum_{i=1}^{N}\omega_{i}\frac{M\theta_{i3}}{\delta_{i}}\frac{\Delta_{i}}{\left[\left(\theta_{i1}+\theta_{i3}\right)e^{\theta_{i1}b}+\left(\theta_{i2}-\theta_{i3}\right)e^{-\theta_{i2}b}\right]^{2}},
\]
where
\begin{eqnarray*}
\Delta_{i} & = & \left(\theta_{i1}^{2}e^{\theta_{i1}b}-\theta_{i2}^{2}e^{-\theta_{i2}b}\right)\left[\left(\theta_{i1}+\theta_{i3}\right)e^{\theta_{i1}b}+\left(\theta_{i2}-\theta_{i3}\right)e^{-\theta_{i2}b}\right]\\
 &  & -\left(\theta_{i1}e^{\theta_{i1}b}+\theta_{i2}e^{-\theta_{i2}b}\right)\left[\theta_{i1}\left(\theta_{i1}+\theta_{i3}\right)e^{\theta_{i1}b}-\theta_{i2}\left(\theta_{i2}-\theta_{i3}\right)e^{-\theta_{i2}b}\right]\\
 & = & \left[\theta_{i1}^{2}\left(\theta_{i2}-\theta_{i3}\right)-\theta_{i2}^{2}\left(\theta_{i1}+\theta_{i3}\right)+\theta_{i1}\theta_{i2}\left(\theta_{i2}-\theta_{i3}\right)-\theta_{i2}\theta_{i1}\left(\theta_{i1}+\theta_{i3}\right)\right]e^{\left(\theta_{i1}-\theta_{i2}\right)b}\\
 & = & \left[\theta_{i1}\left(\theta_{i2}-\theta_{i3}\right)-\theta_{i2}\left(\theta_{i1}+\theta_{i3}\right)\right]\left(\theta_{i1}+\theta_{i2}\right)e^{\left(\theta_{i1}-\theta_{i2}\right)b}\\
 & = & -\theta_{i3}\left(\theta_{i1}+\theta_{i2}\right)^{2}e^{\left(\theta_{i1}-\theta_{i2}\right)b}\\
 & < & 0.
\end{eqnarray*}
Therefore, the equation $F(b)=0$ admits a unique solution on $(0,\infty).$\end{proof}
\begin{thm}
\label{thm:4-2}Given the discount function (\ref{eq:5-1-1}), there
exists a smooth function $c(\cdot,\cdot,\cdot)$ satisfying the equilibrium
HJB-equation (\ref{eq:3-2}).
\begin{enumerate}[label=(\roman{enumi})]
\item If $\sum_{i=1}^{N}\omega_{i}\frac{M\theta_{i3}}{\delta_{i}}\leq1$,
then \textbf{$b=0$} and the function $c(\cdot,\cdot,\cdot)$ is given
by
\begin{equation}
c(s,t,x)=\sum_{i=1}^{N}\omega_{i}e^{-\delta_{i}(t-s)}\frac{M}{\delta_{i}}\left(1-e^{-\theta_{i3}x}\right),\q x\in[0,\infty).\label{eq:5-1-11}
\end{equation}

\item If $\sum_{i=1}^{N}\omega_{i}\frac{M\theta_{i3}}{\delta_{i}}>1$, then
\begin{equation}
c(s,t,x)=\begin{cases}
\sum_{i=1}^{N}\omega_{i}e^{-\delta_{i}(t-s)}C_{i}\left(e^{\theta_{i1}x}-e^{-\theta_{i2}x}\right), & \q x\in[0,b),\\
\sum_{i=1}^{N}\omega_{i}e^{-\delta_{i}(t-s)}\left(\frac{M}{\delta_{i}}-d_{i}e^{-\theta_{i3}x}\right), & \q x\in[b,\infty),
\end{cases}\label{eq:5-1-12}
\end{equation}
where $C_{i},d_{i},i=1,2,\cdots,N,$ and $b$ is the unique solution
to the system (\ref{eq:5-1-6})-(\ref{eq:5-1-8}).
\end{enumerate}
\end{thm}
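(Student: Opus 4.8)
The plan is to verify by direct substitution that the function $c$ exhibited in the statement solves the equilibrium HJB-equation (\ref{eq:3-2}), with case (i) being the degenerate specialisation $b=0$ of case (ii). First, in case (ii) the preceding Lemma supplies the unique $b>0$ with $F(b)=0$; substituting it into (\ref{eq:5-1-9})--(\ref{eq:5-1-10}) defines the constants $C_i,d_i$, and I would record that $C_i>0$ and $d_i>0$ for every $i$. This is pure sign bookkeeping: $\theta_2(\eta,c)=\sigma^{-2}(\eta+\sqrt{\eta^2+2\sigma^2c})$ is strictly increasing in $\eta$, so $\theta_{i3}=\theta_2(\mu-M,\delta_i)<\theta_2(\mu,\delta_i)=\theta_{i2}$, and hence the common denominator $(\theta_{i1}+\theta_{i3})e^{\theta_{i1}b}+(\theta_{i2}-\theta_{i3})e^{-\theta_{i2}b}$ appearing in (\ref{eq:5-1-9})--(\ref{eq:5-1-10}) is a sum of positive terms. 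It then follows that each $V_i$ is nonnegative, strictly increasing, and bounded above by $M/\delta_i$, so $c$ is nonnegative and bounded by $\sum_i\omega_i M/\delta_i$, as required by Theorem~\ref{thm:3-2}; for case (i) these facts are immediate from (\ref{eq:5-1-11}).

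Next I would check regularity and the differential equation. On each of $\mcl D[0,\infty)\times(0,b)$ and $\mcl D[0,\infty)\times(b,\infty)$ the ansatz $c(s,t,x)=\sum_i\omega_i e^{-\delta_i(t-s)}V_i(x)$ is clearly $C^\infty$, and the smooth-fit conditions (\ref{eq:5-1-5}) make $c(s,t,\cdot)$ of class $C^1$ across $x=b$. I would note the subtlety that $\partial^2 c/\partial x^2$ is continuous at $x=b$ only on the diagonal $s=t$ (because $\sum_i\omega_i V_i'(b)=1$ while the individual values $V_i'(b)$ need not equal $1$ when $N\ge2$), but that this is harmless: the controlled diffusion $X^{\hat\pi}$ spends zero Lebesgue time at $\{x=b\}$ and $\partial c/\partial x$ is continuous there, so the Dynkin-formula step in the proof of Theorem~\ref{thm:3-2} is unaffected. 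Substituting the ansatz into (\ref{eq:5-2}) and using $\partial c/\partial t=-\sum_i\omega_i\delta_i e^{-\delta_i(t-s)}V_i$ and $h(t-s)M=\sum_i\omega_i e^{-\delta_i(t-s)}M$, the equation collapses termwise into $\omega_i e^{-\delta_i(t-s)}$ times the ODE system (\ref{eq:5-1-3}), which the $V_i$ satisfy by the choice of exponents; the boundary condition $c(s,t,0)=\sum_i\omega_i e^{-\delta_i(t-s)}V_i(0)=0$ is likewise immediate.

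The real content is to check that (\ref{eq:5-2}) is genuinely equivalent to (\ref{eq:3-2}) for this $c$, i.e. that the Hamiltonian maximiser $\phi(t,t,\partial c/\partial x(t,t,x),\cdot)$ reproduces the bang-bang control (\ref{eq:5-1}) with threshold \emph{exactly} $b$. Since $H(s,t,l,p,P)=\tfrac12\sigma^2P+\mu p+(h(t-s)-p)l$ and $h(0)=1$, this amounts to showing that $g(x):=\partial c/\partial x(t,t,x)=\sum_i\omega_i V_i'(x)$ obeys $g(x)>1$ on $[0,b)$ and $g(x)<1$ on $(b,\infty)$, with $g(b)=1$ (the last equality being precisely (\ref{eq:5-1-8})). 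On $[b,\infty)$ one has $V_i'(x)=d_i\theta_{i3}e^{-\theta_{i3}x}>0$ strictly decreasing, so $g$ is strictly decreasing there and $g<g(b)=1$. On $[0,b)$ I would work with $G(x):=\partial^2 c/\partial x^2(t,t,x)=\sum_i\omega_i V_i''(x)$: from $V_i''(x)=C_i(\theta_{i1}^2e^{\theta_{i1}x}-\theta_{i2}^2e^{-\theta_{i2}x})$ and $C_i>0$ we get $G'(x)=\sum_i\omega_i C_i(\theta_{i1}^3e^{\theta_{i1}x}+\theta_{i2}^3e^{-\theta_{i2}x})>0$, so $G$ is strictly increasing on $[0,b)$; moreover $G$ extends continuously to $x=b$ because the one-sided jump $\sum_i\omega_i(V_i''(b+)-V_i''(b-))=\tfrac{2M}{\sigma^2}\sum_i\omega_i(V_i'(b)-1)$ vanishes by (\ref{eq:5-1-8}), with common value $G(b)=-\sum_i\omega_i d_i\theta_{i3}^2e^{-\theta_{i3}b}<0$. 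Hence $G<0$ on $[0,b]$, so $g$ is strictly decreasing on $[0,b]$ and therefore $g>g(b)=1$ throughout $[0,b)$. Thus $\phi=0$ for $x<b$ and $\phi=M$ for $x>b$, so the control $\hat\pi$ in (\ref{eq:3-5}) is indeed the barrier strategy (\ref{eq:5-1}) and (\ref{eq:3-2}) reduces to (\ref{eq:5-2}). In case (i) the same computation gives $g(x)=\sum_i\omega_i\tfrac{M\theta_{i3}}{\delta_i}e^{-\theta_{i3}x}\le g(0)=\sum_i\omega_i\tfrac{M\theta_{i3}}{\delta_i}\le1$, so $\hat\pi\equiv M$ on $(0,\infty)$, which is the barrier strategy with $b=0$.

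The step I expect to be the main obstacle is this last one — localising the sign change of $g-1$ exactly at $b$ — and the idea that makes it go through is to observe that, although the individual $V_i$ fail to be $C^2$ at $b$ when $N\ge2$, the aggregate $\partial^2 c/\partial x^2(t,t,\cdot)$ is continuous, negative at $b$, and strictly increasing on $[0,b)$. For completeness one then checks the transversality condition (\ref{eq:3-4}): on $\{\tau_t^{\hat\pi}<\infty\}$ one has $c(s,\tau_n,X_{\tau_n}^{\hat\pi})=c(s,\tau_t^{\hat\pi},0)=0$ eventually, while on $\{\tau_t^{\hat\pi}=\infty\}$ boundedness of $c$ together with $e^{-\delta_i(n-s)}\to0$ forces $c(s,n,X_n^{\hat\pi})\to0$; Theorem~\ref{thm:3-2} then additionally identifies $\hat\pi$ and $V(t,x)=c(t,t,x)$ as the equilibrium strategy and value function.
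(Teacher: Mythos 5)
Your proof is correct and follows essentially the same route as the paper's: verify the system (\ref{eq:5-2}) termwise through the ansatz, then establish the threshold property (\ref{eq:55-1-13}) by showing that $\frac{\partial^{2}c}{\partial x^{2}}(t,t,\cdot)$ is continuous at $x=b$, negative there, and increasing on $[0,b)$ (the paper phrases the last point as $\frac{\partial^{3}c}{\partial x^{3}}(t,t,x)>0$), with case (i) handled by the same monotonicity argument from $x=0$. Your extra observation that $\frac{\partial^{2}c}{\partial x^{2}}(s,t,\cdot)$ is generally discontinuous at $x=b$ off the diagonal $s=t$ when $N\geq2$ is a genuine subtlety that the paper's ``smooth enough'' hypothesis glosses over, and your explicit positivity check for $C_{i},d_{i}$ from (\ref{eq:5-1-9})--(\ref{eq:5-1-10}) is cleaner than the paper's assertion.
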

\begin{proof}
(i) It is easy to check the function $c(s,t,x)$ given by (\ref{eq:5-1-11})
and $b=0$ satisfy the system of ODEs (\ref{eq:5-2}). Obviously,
we have 
\begin{eqnarray*}
\frac{\partial c}{\partial x}(s,t,0) & = & \sum_{i=1}^{N}\omega_{i}e^{-\delta_{i}(t-s)}\frac{M}{\delta_{i}}\theta_{i3}\leq1,\q(s,t)\in\mcl D[0,\infty),\\
\frac{\partial^{2}c}{\partial x^{2}}(s,t,x) & = & -\sum_{i=1}^{N}\omega_{i}e^{-\delta_{i}(t-s)}\frac{M}{\delta_{i}}\theta_{i3}^{2}e^{-\theta_{i3}x}<0,\q(s,t,x)\in\mcl D[0,\infty)\times[0,\infty).
\end{eqnarray*}
Thus, $\frac{\partial c}{\partial x}(t,t,x)<1,$ for $x\geq0$, which
implies $c(\cdot,\cdot,\cdot)$ satisfies the equilibrium HJB-equation
(\ref{eq:3-2}).

(ii) Similarly, it is easy to check that $b$ and $c(\cdot,\cdot,\cdot,)$
given by (\ref{eq:5-1-6})-(\ref{eq:5-1-8}) and (\ref{eq:5-1-12})
satisfy the system of ODEs (\ref{eq:5-2}). It is sufficient to show
\begin{equation}
\begin{cases}
\frac{\partial c}{\partial x}(t,t,x)\geq1, & x\in[0,b),\\
\frac{\partial c}{\partial x}(t,t,x)<1, & x\in[b,\infty).
\end{cases}\label{eq:55-1-13}
\end{equation}
The first and second derivatives of $c(s,t,x)$ given by (\ref{eq:5-1-12})
with respective to $x$ are 
\[
\frac{\partial c}{\partial x}(t,t,x)=\begin{cases}
\sum_{i=1}^{N}\omega_{i}C_{i}\left(\theta_{i1}e^{\theta_{i1}x}+\theta_{i2}e^{-\theta_{i2}x}\right), & (t,x)\in[0,\infty)\times[0,b),\\
\sum_{i=1}^{N}\omega_{i}d_{i}\theta_{i3}e^{-\theta_{i3}x}, & (t,x)\in[0,\infty)\times[b,\infty),
\end{cases}
\]
 and

\[
\frac{\partial^{2}c}{\partial x^{2}}(t,t,x)=\begin{cases}
\sum_{i=1}^{N}\omega_{i}C_{i}\left(\theta_{i1}^{2}e^{\theta_{i1}x}-\theta_{i2}^{2}e^{-\theta_{i2}x}\right), & (t,x)\in[0,\infty)\times[0,b),\\
-\sum_{i=1}^{N}\omega_{i}d_{i}\theta_{i3}^{2}e^{-\theta_{i3}x}, & (t,x)\in[0,\infty)\times[b,\infty),
\end{cases}
\]
respectively. 

It is easy to check that $\frac{\partial c}{\partial x}(t,t,x)>0$,
for all $(t,x)\in[0,\infty)\times[0,\infty)$, which implies that
$c(t,t,\cdot)$ is strictly increasing. Next we show that $c(t,t,\cdot)$
is a concave function on $[0,\infty)$, i.e. $\frac{\partial^{2}c}{\partial x^{2}}(t,t,x)<0$,
for all $(t,x)\in[0,\infty)\times[0,\infty)$. First we show that
$\frac{\partial^{2}c}{\partial x^{2}}(t,t,x)$ is continuous at $x=b$.
Apparently, $\frac{\partial^{2}c}{\partial x^{2}}(t,t,x)<0$, for
all $(t,x)\in[0,\infty)\times[b,\infty)$. Recalling (\ref{eq:5-1-2}),
(\ref{eq:5-1-3}) and (\ref{eq:5-1-5}), we have
\begin{align*}
\frac{1}{2}\sigma^{2}\frac{\partial^{2}c}{\partial x^{2}}(t,t,b-) & =-\mu\frac{\partial c}{\partial x}(t,t,b)+\sum_{i=1}^{N}\omega_{i}\delta_{i}V_{i}(b),\\
\frac{1}{2}\sigma^{2}\frac{\partial^{2}c}{\partial x^{2}}(t,t,b+) & =-\left(\mu-M\right)\frac{\partial c}{\partial x}(t,t,b)+\sum_{i=1}^{N}\omega_{i}\delta_{i}V_{i}(b)-M.
\end{align*}
Since $\frac{\partial c}{\partial x}(t,t,b)=1$, we get $\frac{\partial^{2}c}{\partial x^{2}}(t,t,b-)=\frac{\partial^{2}c}{\partial x^{2}}(t,t,b+)=\frac{\partial^{2}c}{\partial x^{2}}(t,t,b).$

Obviously, for all $0\leq x\leq b,$ we have 
\[
\frac{\partial^{3}c}{\partial x^{3}}(t,t,x)=\sum_{i=1}^{N}\omega_{i}C_{i}\left(\theta_{i1}^{3}e^{\theta_{i1}x}+\theta_{i2}^{3}e^{-\theta_{i2}x}\right)>0,
\]
which means that $\frac{\partial^{2}c}{\partial x^{2}}(t,t,x)\leq\frac{\partial^{2}c}{\partial x^{2}}(t,t,b)<0,$
for all $0\leq x\leq b$. Thus, we proved (\ref{eq:55-1-13}).\end{proof}
\begin{cor}
\label{cor:4-3}Consider the discount function (\ref{eq:5-1-1}).
\begin{enumerate}[label=(\roman{enumi})]
\item If $\sum_{i=1}^{N}\omega_{i}\frac{M\theta_{i3}}{\delta_{i}}\leq1$,
then for $t\in[0,\infty)$
\[
\hat{\pi}(t,x)=\phi\left(t,t,\frac{\partial c}{\partial x}(t,t,x),\frac{\partial^{2}c}{\partial x^{2}}(t,t,x)\right)=M,\q x\in[0,\infty),
\]
is an equilibrium dividend strategy, and 
\[
V(t,x)=c(t,t,x)=\sum_{i=1}^{N}\omega_{i}\frac{M}{\delta_{i}}\left(1-e^{-\theta_{i3}x}\right),\q x\in[0,\infty),
\]
is the corresponding equilibrium value function.
\item If $\sum_{i=1}^{N}\omega_{i}\frac{M\theta_{i3}}{\delta_{i}}>1$, then
for $t\in[0,\infty)$
\[
\hat{\pi}(t,x)=\phi\left(t,t,\frac{\partial c}{\partial x}(t,t,x),\frac{\partial^{2}c}{\partial x^{2}}(t,t,x)\right)=\begin{cases}
0, & x\in[0,b),\\
M, & x\in[b,\infty),
\end{cases}
\]
is an equilibrium dividend strategy, and 
\[
V(t,x)=c(t,t,x)=\begin{cases}
\sum_{i=1}^{N}\omega_{i}C_{i}\left(e^{\theta_{i1}x}-e^{-\theta_{i2}x}\right), & x\in[0,b),\\
\sum_{i=1}^{N}\omega_{i}\left(\frac{M}{\delta_{i}}-d_{i}e^{-\theta_{i3}x}\right), & x\in[b,\infty),
\end{cases}
\]
is the corresponding equilibrium value function. Here $C_{i},d_{i},i=1,2,\cdots,N,$
and $b$ is the unique solution to the system (\ref{eq:5-1-6})-(\ref{eq:5-1-8}).
\end{enumerate}
\end{cor}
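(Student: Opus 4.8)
The plan is to obtain Corollary~\ref{cor:4-3} as a direct application of the Verification Theorem (Theorem~\ref{thm:3-2}) to the function $c(\cdot,\cdot,\cdot)$ produced by Theorem~\ref{thm:4-2}. Theorem~\ref{thm:4-2} already supplies a smooth $c$ solving the equilibrium HJB-equation~\eqref{eq:3-2}, so the remaining work is only to check the extra hypotheses of Theorem~\ref{thm:3-2} — that $c$ is bounded and that the transversality condition~\eqref{eq:3-4} holds along $X^{\hat\pi}$ — and then to identify the feedback map $\hat\pi$ of~\eqref{eq:3-5} with the piecewise-constant strategy in the statement. Boundedness is immediate: since $s\le t$ on $\mcl D[0,\infty)$ we have $0\le e^{-\delta_i(t-s)}\le1$, so in case (i) one reads off $0\le c(s,t,x)\le\sum_{i=1}^N\omega_i M/\delta_i$ from $0\le 1-e^{-\theta_{i3}x}<1$; in case (ii) the same estimate controls $c$ on $[b,\infty)$ through $0<d_i e^{-\theta_{i3}x}\le d_i$, while on the compact interval $[0,b)$ the continuous functions $x\mapsto e^{\theta_{i1}x}-e^{-\theta_{i2}x}$ are bounded.

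For the transversality condition, write $\tau_n=n\wedge\tau_t^{\hat\pi}$. On the event $\{\tau_t^{\hat\pi}<\infty\}$ we have $\tau_n=\tau_t^{\hat\pi}$ and $X^{\hat\pi}_{\tau_n}=0$ for all large $n$, so $c(s,\tau_n,X^{\hat\pi}_{\tau_n})=c(s,\tau_t^{\hat\pi},0)=0$ by the boundary condition in~\eqref{eq:3-2}. On the event $\{\tau_t^{\hat\pi}=\infty\}$ we have $\tau_n=n\to\infty$, and since $c(s,n,x)=\sum_{i=1}^N\omega_i e^{-\delta_i(n-s)}g_i(x)$ with each $g_i$ bounded (the factor multiplying $e^{-\delta_i(t-s)}$ in~\eqref{eq:5-1-11}, resp.~\eqref{eq:5-1-12}) and $e^{-\delta_i(n-s)}\to0$ because $\delta_i>0$, we get $c(s,\tau_n,X^{\hat\pi}_{\tau_n})\to0$ a.s., which is~\eqref{eq:3-4}. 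One also uses that SDE~\eqref{eq:2-0} with $\pi=\hat\pi$ admits a unique solution: in case (i) the drift equals the constant $\mu-M$ on $(0,\infty)$, and in case (ii) it is bounded and measurable with non-degenerate constant diffusion, so strong existence and uniqueness hold by standard results.

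Finally, the identification of $\hat\pi$: from the computation preceding~\eqref{eq:5-1}, setting $s=t$ gives $h(0)=1$, so $\phi(t,t,p,P)=0$ if $p\ge1$ and $\phi(t,t,p,P)=M$ if $p<1$. In case (i), Theorem~\ref{thm:4-2}(i) shows $\partial c/\partial x(t,t,x)<1$ for $x>0$, hence $\hat\pi(t,x)=M$ there; at $x=0$ admissibility forces $\hat\pi(t,0)=0$, but this is irrelevant for the dynamics and the value, since the controlled path only visits $x>0$ before ruin and is absorbed at $0$ afterwards. In case (ii), the inequalities~\eqref{eq:55-1-13} give $\hat\pi(t,x)=0$ on $[0,b)$ and $\hat\pi(t,x)=M$ on $[b,\infty)$. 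With all hypotheses of Theorem~\ref{thm:3-2} verified, $\hat\pi$ is an equilibrium control law and $V(t,x)=c(t,t,x)$ is the corresponding equilibrium value function; the stated formulas for $V$ then follow by putting $s=t$ in~\eqref{eq:5-1-11} and~\eqref{eq:5-1-12}. The only part that is not purely mechanical is the transversality check, and there the essential observation is that the discount factor $e^{-\delta_i(n-s)}$ forces decay on the non-ruin event while the boundary condition $c(s,t,0)=0$ annihilates the term on the ruin event.
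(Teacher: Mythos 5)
Your proposal is correct, and its overall architecture matches the paper's: invoke Theorem~\ref{thm:4-2} for the HJB solution and Theorem~\ref{thm:3-2} for the verification, so that the only substantive task is checking the transversality condition~\eqref{eq:3-4}. Where you differ is in how~\eqref{eq:3-4} is checked. The paper splits on the sign of $M-\mu$: for $M\geq\mu$ it cites the classical fact that $\pr\left(\tau_{t}^{\hat{\pi}}<\infty\right)=1$ and uses only the boundary condition $c(s,t,0)=0$; for $M<\mu$ it uses that $X_{\tau_{t}^{\hat{\pi}}}^{\hat{\pi}}=+\infty$ on the non-ruin event together with $\lim_{t\tto\infty,x\tto\infty}c(s,t,x)=0$. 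You instead split directly on the events $\{\tau_{t}^{\hat{\pi}}<\infty\}$ and $\{\tau_{t}^{\hat{\pi}}=\infty\}$ and observe that on the latter $\tau_{n}=n\tto\infty$, so the factor $e^{-\delta_{i}(n-s)}$ alone forces $c(s,\tau_{n},X_{\tau_{n}}^{\hat{\pi}})\tto0$ because the spatial parts $g_{i}$ are uniformly bounded. This is slightly more elementary and more robust: it needs no ruin-probability input and no claim about the asymptotic behavior of the surplus, only $\delta_{i}>0$ and boundedness of $c(t,t,\cdot)$. You also explicitly verify two hypotheses the paper leaves implicit — boundedness of $c$ and well-posedness of the SDE under $\hat{\pi}$ — and you correctly flag the cosmetic mismatch at $x=0$ in case (i) between the formula $\hat{\pi}=M$ and the admissibility requirement $\hat{\pi}(t,0)=0$, which is harmless since the process is absorbed at the ruin level. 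No gaps.
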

\begin{proof}
By Theorem \ref{thm:3-2} and Theorem \ref{thm:4-2}, it is sufficient
to verify (\ref{eq:3-4}). If $M\geq\mu$, in both cases (i) and (ii),
it is well known that $\pr\left(\tau_{t}^{\hat{\pi}}<\infty\right)=1$
(see, e.g. \citet{gs06b}). Since $c(s,t,0)=0$ for all $(s,t)\in\mcl D[0,\infty),$we
get (\ref{eq:3-4}). If $M<\mu$, in both cases (i) and (ii), we have
$\pr\left(\tau_{t}^{\hat{\pi}}=\infty\right)>0$ and $X_{\tau_{t}^{\hat{\pi}}}^{\hat{\pi}}=+\infty$$ $
on $\{\tau_{t}^{\hat{\pi}}=\infty\}$. However, for any $s\in[0,\infty)$
we have $\lim_{t\tto\infty,x\tto\infty}c(s,t,x)=0$. Thus, we still
have (\ref{eq:3-4}).\end{proof}
\begin{example}
Let $N=2,$ $\mu$ = 1, $\sigma$ = 1, $M$ = 0.8, $\delta_{1}$ =
0.2, $\delta_{2}$ = 0.4. Figure \ref{fig:4-1} illustrates the equilibrium
value functions for the mixture of exponential discount functions
with $\omega=0$, 0.4, 0.7 and 1. The barriers are $0.6525$, $0.8781$,
$1.0207$ and $1.1452$, respectively. The cases with $\omega=0$
and 1 are time consistent and the equilibrium strategies are optimal. 
\end{example}
\begin{figure}
\begin{centering}
\includegraphics[scale=1.2]{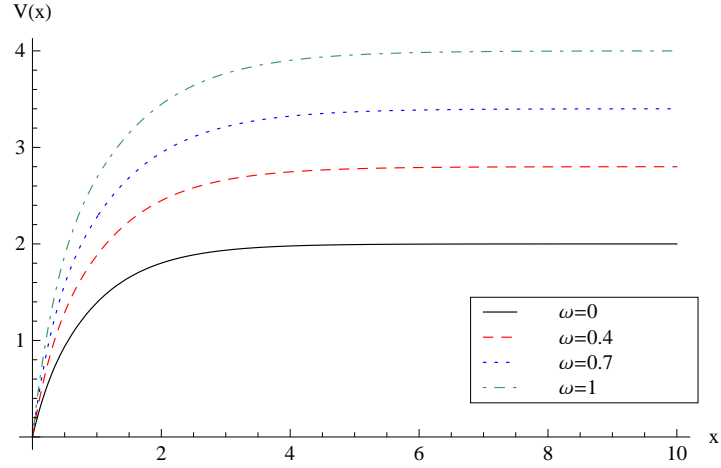}
\par\end{centering}

\caption{\label{fig:4-1}Equilibrium value functions with a mixture of exponential
discount functions }
\end{figure}

\subsection{A Pseudo-Exponential Discount Function}

We now consider a pseudo-exponential discount function defined as
\begin{equation}
h(t)=(1+\lambda t)e^{-\delta t},\q t\geq0,\label{eq:4-2-1}
\end{equation}
where $\lambda>0,\ \delta>0$ are parameters. We refer the reader
to \citet{ep08} for explanations of this discount function. To ensure
$h$ is decreasing, we assume that $\lambda<\delta$. To simplify
the calculations, we shall impose more conditions on $\lambda$ in
the following.

We consider the following ansatz:
\begin{equation}
c(s,t,x)=e^{-\delta(t-s)}\left\{ \lambda(t-s)V_{3}(x)+V_{4}(x)\right\} ,\q(s,t,x)\in\mcl D[0,\infty)\times[0,\infty),\label{eq:4-2-2}
\end{equation}
where $V_{3}(\cdot)$ and $V_{4}(\cdot)$ are given by 
\begin{equation}
\begin{cases}
\frac{1}{2}\sigma^{2}\frac{\partial^{2}V_{3}}{\partial x^{2}}(x)+\mu\frac{\partial V_{3}}{\partial x}(x)-\delta V_{3}(x)=0, & x\in[0,b),\\
\frac{1}{2}\sigma^{2}\frac{\partial^{2}V_{3}}{\partial x^{2}}(x)+\left(\mu-M\right)\frac{\partial V_{3}}{\partial x}(x)-\delta V_{3}(x)+M=0, & x\in[b,\infty),\\
V_{3}(0)=0,
\end{cases}\label{eq:4-2-3}
\end{equation}
and 
\begin{equation}
\begin{cases}
\frac{1}{2}\sigma^{2}\frac{\partial^{2}V_{4}}{\partial x^{2}}(x)+\mu\frac{\partial V_{4}}{\partial x}(x)-\delta V_{4}(x)+\lambda V_{3}(x)=0, & x\in[0,b),\\
\frac{1}{2}\sigma^{2}\frac{\partial^{2}V_{4}}{\partial x^{2}}(x)+(\mu-M)\frac{\partial V_{4}}{\partial x}(x)-\delta V_{4}(x)+\lambda V_{3}(x)+M=0, & x\in[b,\infty),\\
V_{4}(0)=0,
\end{cases}\label{eq:4-2-4}
\end{equation}
respectively. It is easy to check that the function $c(\cdot,\cdot,\cdot)$
given by (\ref{eq:4-2-2})-(\ref{eq:4-2-4}) satisfies the system
(\ref{eq:5-2}).

Recalling the situation we discussed in Subsection 4.1, the equation
(\ref{eq:4-2-3}) has a general solution
\begin{equation}
V_{3}(x)=\begin{cases}
C\left(e^{\theta_{1}(\mu)x}-e^{-\theta_{2}(\mu)x}\right), & x\in[0,b),\\
\frac{M}{\delta}-de^{-\theta_{2}(\mu-M)x}, & x\in[b,\infty),
\end{cases}\label{eq:4-2-5}
\end{equation}
where $C>0,\ d>0$ are two unknown constants to be determined, $\theta_{1}(\eta)$
and $-\theta_{2}(\eta)$ are the positive and negative roots of the
equation $\frac{1}{2}\sigma^{2}y^{2}+\eta y-\delta=0$, respectively. 

According to ``the principle of smooth fit'', we have
\begin{equation}
\begin{cases}
V_{3}(b+) & =V_{3}(b-),\\
V_{3}'(b+) & =V_{3}'(b-),
\end{cases}\label{eq:4-2-6}
\end{equation}
which yields that
\begin{align}
C & =\frac{M\theta_{3}}{\delta}\left[\left(\theta_{1}+\theta_{3}\right)e^{\theta_{1}b}+\left(\theta_{2}-\theta_{3}\right)e^{-\theta_{2}b}\right]^{-1},\label{eq:4-2-0}\\
d & =\frac{M}{\delta}e^{\theta_{3}b}\frac{\theta_{1}e^{\theta_{1}b}+\theta_{2}e^{-\theta_{2}b}}{\left(\theta_{1}+\theta_{3}\right)e^{\theta_{1}b}+\left(\theta_{2}-\theta_{3}\right)e^{-\theta_{2}b}},\label{eq:4-2-00}
\end{align}
where 
\[
\theta_{1}=\theta_{1}(\mu),\q\theta_{2}=\theta_{2}(\mu),\q\theta_{3}=\theta_{2}(\mu-M).
\]

After obtaining $V_{3}$, solving ODE (\ref{eq:4-2-4}) yields that
\begin{equation}
V_{4}(x)=\begin{cases}
\left(D_{1}-B_{1}x\right)e^{\theta_{1}x}+\left(D_{2}+B_{2}x\right)e^{-\theta_{2}x}, & \q0\leq x<b,\\
\frac{M}{\delta}\left(1+\frac{\lambda}{\delta}\right)+\left(D_{3}+B_{3}x\right)e^{-\theta_{3}x}, & \q x\geq b,
\end{cases}\label{eq:4-2-7}
\end{equation}
where
\begin{equation}
B_{1}=\frac{\lambda C}{\mu+\sigma^{2}\theta_{1}}>0,\q B_{2}=\frac{\lambda C}{\mu-\sigma^{2}\theta_{2}}<0,\q B_{3}=\frac{\lambda d}{\mu-M-\sigma^{2}\theta_{3}}<0.\label{eq:4-2-000}
\end{equation}
 Since $V_{4}(0)=0$, we have $D_{1}=-D_{2}:=\hat{C}$. Also noting
that $B_{1}+B_{2}=0$, we rewrite (\ref{eq:4-2-7}) as 
\begin{equation}
V_{4}(x)=\begin{cases}
\left(\hat{C}-B_{1}x\right)e^{\theta_{1}x}-\left(\hat{C}+B_{1}x\right)e^{-\theta_{2}x}, & \q0\leq x<b,\\
\frac{M}{\delta}\left(1+\frac{\lambda}{\delta}\right)+\left(D_{3}+B_{3}x\right)e^{-\theta_{3}x}, & \q x\geq b.
\end{cases}\label{eq:4-2-8}
\end{equation}

Applying \textquotedbl{}the principle of smooth fit\textquotedbl{},
we obtain
\begin{equation}
\begin{cases}
V_{4}(b+) & =V_{4}(b-),\\
V_{4}'(b+) & =V_{4}'(b-),\\
\frac{\partial c}{\partial x}(t,t,b+) & =1\ \left(\text{or equavalently, }\frac{\partial c}{\partial x}(t,t,b-)=1\right).
\end{cases}\label{eq:4-2-9}
\end{equation}
From the first two equations in (\ref{eq:4-2-9}), we obtain
\begin{align}
\hat{C} & =\frac{\left[\left(\theta_{1}+\theta_{3}\right)b+1\right]B_{1}e^{\theta_{1}b}-\left[\left(\theta_{2}-\theta_{3}\right)b-1\right]B_{1}e^{-\theta_{2}b}+B_{3}e^{-\theta_{3}b}+\theta_{3}\left(1+\frac{\lambda}{\delta}\right)\frac{M}{\delta}}{\left(\theta_{1}+\theta_{3}\right)e^{\theta_{1}b}+\left(\theta_{2}-\theta_{3}\right)e^{-\theta_{2}b}},\label{eq:4-2-10}\\
D_{3} & =e^{\theta_{3}b}\left[\left(\hat{C}-B_{1}b\right)e^{\theta_{1}b}-\left(\hat{C}+B_{1}b\right)e^{-\theta_{2}b}-\left(1+\frac{\lambda}{\delta}\right)\frac{M}{\delta}\right]-B_{3}b.\label{eq:4-2-11}
\end{align}
Furthermore, using $\frac{\partial c}{\partial x}(t,t,b+)=\frac{\partial c}{\partial x}(t,t,b-)=1$,
we have
\begin{equation}
\left[\left(\hat{C}-B_{1}b\right)\theta_{1}-B_{1}\right]e^{\theta_{1}b}+\left[\left(\hat{C}+B_{1}b\right)\theta_{2}-B_{1}\right]e^{-\theta_{2}b}-1=0,\label{eq:4-2-12}
\end{equation}
i.e.,
\[
\hat{C}=\frac{1+\left(\theta_{1}b+1\right)B_{1}e^{\theta_{1}b}-\left(\theta_{2}b-1\right)B_{1}e^{-\theta_{2}b}}{\theta_{1}e^{\theta_{1}b}+\theta_{2}e^{-\theta_{2}b}},
\]
and
\[
\left(-\theta_{3}D_{3}-\theta_{3}B_{3}b+B_{3}\right)e^{-\theta_{3}b}-1=0,
\]
i.e., 
\begin{equation}
D_{3}=\frac{1}{\theta_{3}}\left(B_{3}-e^{\theta_{3}b}\right)-B_{3}b.\label{eq:4-2-13}
\end{equation}

Puting (\ref{eq:4-2-10}) and (\ref{eq:4-2-11}) into the left-hand-side
of (\ref{eq:4-2-12}), it can be rewritten as
\[
\left[\left(\theta_{1}+\theta_{3}\right)e^{\theta_{1}b}+\left(\theta_{2}-\theta_{3}\right)e^{-\theta_{2}b}\right]^{-1}G(b),
\]
where
\begin{eqnarray}
G(b) & := & -\theta_{3}B_{1}e^{2\theta_{1}b}+\theta_{3}B_{1}e^{-2\theta_{2}b}+\theta_{1}B_{3}e^{\left(\theta_{1}-\theta_{3}\right)b}\nonumber \\
 &  & +\theta_{2}B_{3}e^{-\left(\theta_{2}+\theta_{3}\right)b}+2\left(\theta_{1}+\theta_{2}\right)\theta_{3}B_{1}be^{\left(\theta_{1}-\theta_{2}\right)b}\nonumber \\
 &  & +\left[\theta_{1}\theta_{3}\left(1+\frac{\lambda}{\delta}\right)\frac{M}{\delta}-\left(\theta_{1}+\theta_{3}\right)\right]e^{\theta_{1}b}\nonumber \\
 &  & +\left[\theta_{2}\theta_{3}\left(1+\frac{\lambda}{\delta}\right)\frac{M}{\delta}-\left(\theta_{2}-\theta_{3}\right)\right]e^{-\theta_{2}b},\label{eq:4-2-14}
\end{eqnarray}
and
\[
G(0)=\left(\theta_{1}+\theta_{2}\right)\left\{ \left[\frac{\lambda}{\mu-M-\sigma^{2}\theta_{3}}+\theta_{3}\left(1+\frac{\lambda}{\delta}\right)\right]\frac{M}{\delta}-1\right\} .
\]

\begin{lem}
\label{lem:4-5}If 
\begin{equation}
\frac{\frac{\delta}{M}-\theta_{3}}{\frac{1}{\mu-M-\sigma^{2}\theta_{3}}+\frac{\theta_{3}}{\delta}}<\lambda<\frac{\left(\theta_{1}+\theta_{3}\right)\left[\frac{\delta}{M}\left(\theta_{1}+\theta_{3}\right)-\theta_{1}\theta_{3}\right]}{\theta_{1}^{2}\left(\frac{1}{\mu-M-\sigma^{2}\theta_{3}}+\frac{\theta_{3}}{\delta}\right)+\theta_{3}^{2}\left(\frac{\theta_{1}}{\delta}-\frac{1}{\mu+\sigma^{2}\theta_{1}}\right)},\label{eq:4-2-16}
\end{equation}
then $G(b)=0$ has a positive solution.
\end{lem}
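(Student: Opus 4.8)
The statement is an existence result for a root of $G(b)=0$ on $(0,\infty)$, so the natural approach is the intermediate value theorem: I will show that under the hypothesis (\ref{eq:4-2-16}) the continuous function $G$ satisfies $G(0)>0$ and $G(b)<0$ for $b$ large (or, more robustly, $\liminf_{b\to\infty}G(b)<0$), and then invoke continuity. The left inequality in (\ref{eq:4-2-16}) is exactly what is needed at $b=0$: from the displayed formula
\[
G(0)=\left(\theta_{1}+\theta_{2}\right)\left\{ \left[\frac{\lambda}{\mu-M-\sigma^{2}\theta_{3}}+\theta_{3}\left(1+\tfrac{\lambda}{\delta}\right)\right]\frac{M}{\delta}-1\right\},
\]
one checks that the bracketed quantity is an affine increasing function of $\lambda$ (note $\mu-M-\sigma^{2}\theta_{3}<0$, so $\frac{1}{\mu-M-\sigma^{2}\theta_{3}}<0$, but the coefficient $\frac{1}{\mu-M-\sigma^{2}\theta_3}+\frac{\theta_3}{\delta}$ can be shown positive using $\theta_3=\theta_2(\mu-M)$), and the lower bound on $\lambda$ in (\ref{eq:4-2-16}) is precisely the threshold making this bracket positive; hence $G(0)>0$.

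For the behaviour at infinity, I would examine (\ref{eq:4-2-14}) term by term using the signs $\theta_1,\theta_2,\theta_3>0$, $B_1>0$, $B_3<0$ recorded in (\ref{eq:4-2-000}). The dominant term as $b\to\infty$ is $-\theta_3 B_1 e^{2\theta_1 b}$ unless $2\theta_1$ is not the largest exponent; since $\theta_1>0$ and the competing positive-exponent terms are $e^{2\theta_1 b}$, $e^{(\theta_1-\theta_3)b}$, $e^{(\theta_1-\theta_2)b}\cdot b$ and $e^{\theta_1 b}$, and $2\theta_1>\theta_1>\theta_1-\theta_3$, $\theta_1-\theta_2$, the leading term is $-\theta_3B_1e^{2\theta_1 b}$, which is strictly negative; all lower-order terms (including the polynomial-times-exponential one $2(\theta_1+\theta_2)\theta_3 B_1 b\,e^{(\theta_1-\theta_2)b}$) are of strictly smaller exponential order. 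Therefore $G(b)\to-\infty$, and in particular $G(b)<0$ for $b$ sufficiently large. Combining $G(0)>0$ with $G(b_0)<0$ for some $b_0>0$ and continuity of $G$ gives a root in $(0,b_0)$.

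**Main obstacle.** The genuinely delicate point is verifying $G(0)>0$: it requires confirming that the coefficient $\frac{1}{\mu-M-\sigma^{2}\theta_{3}}+\frac{\theta_{3}}{\delta}$ (the denominator in the lower bound of (\ref{eq:4-2-16})) is strictly positive, so that the lower bound on $\lambda$ really does force the bracketed expression in $G(0)$ to be positive rather than negative. This amounts to an elementary but careful inequality among $\theta_3=\frac{(\mu-M)+\sqrt{(\mu-M)^2+2\sigma^2\delta}}{\sigma^2}$, $\delta$, $\mu$, $M$, $\sigma$; one shows $\sigma^2\theta_3-(\mu-M)=\sqrt{(\mu-M)^2+2\sigma^2\delta}>0$ and then that $\frac{\theta_3}{\delta}>\frac{1}{\sigma^2\theta_3-(\mu-M)}$, equivalently $\sigma^2\theta_3^2-(\mu-M)\theta_3>\delta$, which follows since $\theta_3$ solves $\frac12\sigma^2 y^2+(\mu-M)y-\delta=0$, i.e. $\frac12\sigma^2\theta_3^2-\delta=-(\mu-M)\theta_3$, so $\sigma^2\theta_3^2-(\mu-M)\theta_3-\delta=\tfrac12\sigma^2\theta_3^2>0$. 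The right-hand inequality in (\ref{eq:4-2-16}) is not actually needed for this existence lemma (it will be used later, presumably to pin down the solution or guarantee the smooth-fit verification), so I would not use it here; I would only remark that the hypothesis is in particular consistent, i.e. the interval in (\ref{eq:4-2-16}) is nonempty under the standing assumption $\lambda<\delta$. The rest is routine: continuity of $G$ is immediate from (\ref{eq:4-2-14}) and the IVT closes the argument.
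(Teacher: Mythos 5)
Your verification that $G(0)>0$ is essentially the paper's own: both arguments hinge on the positivity of $\frac{1}{\mu-M-\sigma^{2}\theta_{3}}+\frac{\theta_{3}}{\delta}$ (Lemma \ref{lem:a-1}), and your derivation of that inequality from the characteristic equation $\tfrac{1}{2}\sigma^{2}\theta_{3}^{2}-(\mu-M)\theta_{3}-\delta=0$ is correct. The gap is in the analysis at infinity. You treat $B_{1}$ and $B_{3}$ as fixed constants and conclude that $-\theta_{3}B_{1}e^{2\theta_{1}b}$ dominates $G(b)$. But $B_{1}=\lambda C/(\mu+\sigma^{2}\theta_{1})$ and $B_{3}=\lambda d/(\mu-M-\sigma^{2}\theta_{3})$, where $C$ and $d$ are themselves functions of $b$ through (\ref{eq:4-2-0})--(\ref{eq:4-2-00}); as $b\to\infty$ one has $C\sim\frac{M\theta_{3}}{\delta(\theta_{1}+\theta_{3})}e^{-\theta_{1}b}$ and $d\sim\frac{M\theta_{1}}{\delta(\theta_{1}+\theta_{3})}e^{\theta_{3}b}$. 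Consequently $-\theta_{3}B_{1}e^{2\theta_{1}b}$, $\theta_{1}B_{3}e^{(\theta_{1}-\theta_{3})b}$ and the explicit term $\bigl[\theta_{1}\theta_{3}(1+\frac{\lambda}{\delta})\frac{M}{\delta}-(\theta_{1}+\theta_{3})\bigr]e^{\theta_{1}b}$ are all of the \emph{same} order $e^{\theta_{1}b}$, and they do not share a sign (the last coefficient becomes positive for large $\lambda$). Your hierarchy of exponents is therefore illusory, and $G(b)\to-\infty$ does not follow from the signs of $B_{1},B_{3}$ alone.

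This is precisely why the paper substitutes (\ref{eq:4-2-0}), (\ref{eq:4-2-00}) and (\ref{eq:4-2-000}) into $G$ and rewrites it as $G(b)=g(b)\bigl[(\theta_{1}+\theta_{3})e^{\theta_{1}b}+(\theta_{2}-\theta_{3})e^{-\theta_{2}b}\bigr]^{-1}$: the true leading coefficient, namely that of $e^{2\theta_{1}b}$ in $g$,
\[
\lambda\frac{M}{\delta}\left[\theta_{3}^{2}\left(\frac{\theta_{1}}{\delta}-\frac{1}{\mu+\sigma^{2}\theta_{1}}\right)+\theta_{1}^{2}\left(\frac{1}{\mu-M-\sigma^{2}\theta_{3}}+\frac{\theta_{3}}{\delta}\right)\right]+\theta_{1}\theta_{3}\left(\theta_{1}+\theta_{3}\right)\left(\frac{M}{\delta}-\frac{1}{\theta_{3}}-\frac{1}{\theta_{1}}\right),
\]
is an increasing affine function of $\lambda$ (by Lemma \ref{lem:a-1}) that is negative at $\lambda=0$ (by Lemma 2.1 of Asmussen and Taksar) and changes sign exactly at the right-hand bound of (\ref{eq:4-2-16}). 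So the upper bound on $\lambda$ that you dismiss as unnecessary is in fact the ingredient that forces $G(\infty)<0$; for $\lambda$ above that threshold the same computation gives $G(\infty)=+\infty$ and your intermediate-value argument collapses. To repair the proof you must carry out this substitution (or otherwise track the $b$-dependence of $B_{1}$ and $B_{3}$) before passing to the limit.
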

The proof of Lemma \ref{lem:4-5} is shown in Appendix A. Now we show
the main result of this subsection in the following theorem.
\begin{thm}
\label{thm:4-6}Assume that $0<\lambda<\delta$. Given the discount
function (\ref{eq:4-2-1}), there exists a smooth function $c(\cdot,\cdot,\cdot)$
satisfying the equilibrium HJB-equation (\ref{eq:3-2}). 
\begin{enumerate}[label=(\roman{enumi})]
\item If $\frac{\delta}{M}>\theta_{3}$ and $\lambda\leq\left(\frac{\delta}{M}-\theta_{3}\right)\left[\frac{1}{\mu-M-\sigma^{2}\theta_{3}}+\frac{\theta_{3}}{\delta}\right]^{-1}$,
then $b=0$ and $c(\cdot,\cdot,\cdot)$ is given by (\ref{eq:4-2-2})
with 
\begin{align}
V_{3}(x) & =\frac{M}{\delta}\left(1-e^{-\theta_{3}x}\right),\q x\in[0,\infty),\nonumber \\
V_{4}(x) & =\left(1+\frac{\lambda}{\delta}\right)\frac{M}{\delta}+\frac{M}{\delta}\left[\frac{\lambda}{\mu-M-\sigma^{2}\theta_{3}}x-\left(1+\frac{\lambda}{\delta}\right)\right]e^{-\theta_{3}x},\q x\in[0,\infty).\label{eq:4-2-17}
\end{align}

\item If (\ref{eq:4-2-16}) and (\ref{eq:b-1}) hold, then $c(\cdot,\cdot,\cdot)$
is given by (\ref{eq:4-2-2}) with 
\begin{align}
V_{3}(x) & =\begin{cases}
C\left(e^{\theta_{1}x}-e^{-\theta_{2}x}\right), & x\in[0,b),\\
\frac{M}{\delta}-de^{-\theta_{3}x}, & x\in[b,\infty),
\end{cases}\nonumber \\
V_{4}(x) & =\begin{cases}
\left(\hat{C}-B_{1}x\right)e^{\theta_{1}x}-\left(\hat{C}+B_{1}x\right)e^{-\theta_{2}x}, & 0\leq x<b,\\
\left(1+\frac{\lambda}{\delta}\right)\frac{M}{\delta}+\left(D_{3}+B_{3}x\right)e^{-\theta_{3}x}, & x\in[b,\infty),
\end{cases}\label{eq:4-2-18}
\end{align}
where $\left(b,C,d,\hat{C},B_{1},B_{3},D_{3}\right)$ is a solution
to (\ref{eq:4-2-6}) and (\ref{eq:4-2-9}).
\end{enumerate}
\end{thm}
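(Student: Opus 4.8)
The plan is to follow exactly the same strategy as in the proof of Theorem \ref{thm:4-2}, now with the ansatz (\ref{eq:4-2-2})--(\ref{eq:4-2-4}). Since we have already arranged, by construction of $V_3$ and $V_4$, that $c(\cdot,\cdot,\cdot)$ given by (\ref{eq:4-2-2}) solves the system of ODEs (\ref{eq:5-2}), the only thing left is to verify that the conjectured barrier structure (\ref{eq:5-1}) is consistent with the solution, i.e.\ that
\[
\begin{cases}
\frac{\partial c}{\partial x}(t,t,x)\geq1, & x\in[0,b),\\
\frac{\partial c}{\partial x}(t,t,x)<1, & x\in[b,\infty).
\end{cases}
\]
Note that $\frac{\partial c}{\partial x}(t,t,x)=V_4'(x)$, because the $\lambda(t-s)V_3(x)$ term and the prefactor $e^{-\delta(t-s)}$ both reduce to their value at $s=t$, which kills the $V_3$ contribution and the exponential. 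So in both cases the verification reduces to a statement purely about $V_4'$.

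For case (i), when $b=0$, I would simply differentiate the closed-form expression (\ref{eq:4-2-17}) for $V_4$ and check that $V_4'(x)<1$ for all $x\geq0$; the hypothesis $\frac{\delta}{M}>\theta_3$ together with the upper bound on $\lambda$ is precisely the condition that makes $V_4'(0)\leq1$, and then one shows $V_4'$ is decreasing (or at least stays below $1$) on $[0,\infty)$ by examining the sign of $V_4''$, exactly as in part (i) of the proof of Theorem \ref{thm:4-2}. This is a routine computation. I would also note that the assumption $0<\lambda<\delta$ guarantees $h$ is genuinely decreasing, which is needed for the setup.

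For case (ii), the argument mirrors part (ii) of the proof of Theorem \ref{thm:4-2}. By Lemma \ref{lem:4-5}, the condition (\ref{eq:4-2-16}) gives a positive root $b$ of $G(b)=0$, and the smooth-fit system (\ref{eq:4-2-6}), (\ref{eq:4-2-9}) then determines $(C,d,\hat C,B_1,B_3,D_3)$. One then has to show $V_4'(x)\geq1$ on $[0,b)$ and $V_4'(x)<1$ on $[b,\infty)$. On $[b,\infty)$ one differentiates the explicit form $\left(1+\frac{\lambda}{\delta}\right)\frac{M}{\delta}+(D_3+B_3x)e^{-\theta_3x}$, uses $B_3<0$ from (\ref{eq:4-2-000}) and the boundary value $V_4'(b)=1$ to conclude monotone decay below $1$. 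On $[0,b)$, following the earlier template, one shows $V_4'(b)=1$, that $V_4''$ is continuous at $b$ (using the smooth-fit relations together with the ODEs (\ref{eq:4-2-3})--(\ref{eq:4-2-4}) and $V_4'(b)=1$, just as the continuity of $\frac{\partial^2 c}{\partial x^2}$ at $b$ was established before), and that $V_4'$ is decreasing on $[0,b]$ so that $V_4'(x)\geq V_4'(b)=1$ there. I also need to check the positivity/concavity-type sign conditions: that $V_3$ behaves as in Subsection 4.1 (it is essentially the same object as $V_i$ with $\delta_i=\delta$), and that the constants $\hat C, D_3, C, d$ have the right signs for all of this to go through — this is where the extra condition (\ref{eq:b-1}) on $\lambda$ presumably enters.

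The main obstacle, and the part requiring real work rather than bookkeeping, is establishing that $V_4'$ is monotone on $[0,b]$ (equivalently $V_4''<0$ there). Unlike the mixture-of-exponentials case, $V_4$ contains the polynomial-times-exponential terms $(\hat C-B_1 x)e^{\theta_1 x}$ and $(\hat C+B_1 x)e^{-\theta_2 x}$, so $V_4'''$ is no longer a manifestly positive sum of exponentials; one must use the sign information $B_1>0$, $B_2=-B_1<0$ and an inequality among $\theta_1,\theta_2,\theta_3$ (and possibly the extra restriction (\ref{eq:b-1}) on $\lambda$) to control the sign of $V_4''$ on all of $[0,b]$. I expect this is exactly why the theorem carries the hypothesis (\ref{eq:b-1}) in part (ii) and why ``more conditions on $\lambda$'' were promised in the text, and I would handle it by reducing to a one-variable inequality in $b$ (or in $\theta_1 b$, $\theta_2 b$) and bounding it, deferring the technical estimate to an appendix in parallel with Lemma \ref{lem:4-5}.
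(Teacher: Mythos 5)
Your proposal follows essentially the same route as the paper's proof: reduce the verification to the sign conditions on $V_4'(x)=\frac{\partial c}{\partial x}(t,t,x)$ relative to $1$, handle case (i) by showing $V_4'(0)\leq1$ and $V_4''<0$, and handle case (ii) by concavity on $[b,\infty)$ plus continuity of $V_4''$ at $b$ and positivity of $V_4'''$ on $[0,b)$, with the hard one-variable estimate (where (\ref{eq:b-1}) enters) deferred to an appendix lemma — which is exactly the role of Lemma \ref{lem:4-2}. You correctly identified the genuine obstacle (the sign of $V_4'''$ when polynomial-times-exponential terms appear), so the plan matches the paper's argument in both structure and substance.
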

\begin{proof}
It is easy to check that the function $c(\cdot,\cdot,\cdot)$ given
by (\ref{eq:4-2-2})-(\ref{eq:4-2-4}) satisfies the system (\ref{eq:5-2}).
To prove $c(\cdot,\cdot,\cdot)$ satisfies the equilibrium HJB-equation
(\ref{eq:3-2}), it is sufficient to show 
\begin{equation}
\begin{cases}
\frac{\partial c}{\partial x}(t,t,x)\geq1, & x\in[0,b),\\
\frac{\partial c}{\partial x}(t,t,x)<1, & x\in[b,\infty).
\end{cases}\label{eq:4-2-19}
\end{equation}

(i) Firstly, we show that the function $V_{4}$ defined by (\ref{eq:4-2-17})
is a concave function. Recalling Lemma \ref{lem:a-1} and $\lambda>0$,
we obtain 
\begin{align*}
V_{4}'(x) & =\frac{M}{\delta}\left(\frac{\lambda}{\mu-M-\sigma^{2}\theta_{3}}+\theta_{3}\left(1+\frac{\lambda}{\delta}\right)-\theta_{3}\frac{\lambda}{\mu-M-\sigma^{2}\theta_{3}}x\right)e^{-\theta_{3}x}\\
 & \geq\frac{M}{\delta}\left(\frac{\lambda}{\mu-M-\sigma^{2}\theta_{3}}+\theta_{3}\left(1+\frac{\lambda}{\delta}\right)\right)e^{-\theta_{3}x}\\
 & =\frac{M}{\delta}\left[\left(\frac{1}{\mu-M-\sigma^{2}\theta_{3}}+\frac{\theta_{3}}{\delta}\right)\lambda+\theta_{3}\right]e^{-\theta_{3}x}>0.
\end{align*}
Also note that $V_{3}(0)=V_{4}(0)=0$ and $V_{4}'(0)=\left[\frac{\lambda}{\mu-M-\sigma^{2}\theta_{3}}+\theta_{3}\left(1+\frac{\lambda}{\delta}\right)\right]\frac{M}{\delta}\in(0,1]$.
Recalling the second equation of (\ref{eq:4-2-4}), we have 
\begin{align*}
\frac{1}{2}\sigma^{2}V_{4}''(0) & =-\left(\mu-M\right)V_{4}'(0)+\delta V_{4}(0)-\lambda V_{3}(0)-M\\
 & =-\left(\mu-M\right)V_{4}'(0)-M\\
 & =-\mu V_{4}'(0)+M\left(V_{4}'(0)-1\right)\\
 & <0.
\end{align*}
Thus,
\begin{align*}
V_{4}''(x) & =-\theta_{3}\frac{M}{\delta}\left(\frac{2\lambda}{\mu-M-\sigma^{2}\theta_{3}}+\theta_{3}\left(1+\frac{\lambda}{\delta}\right)-\theta_{3}\frac{\lambda}{\mu-M-\sigma^{2}\theta_{3}}x\right)e^{-\theta_{3}x}\\
 & =\left[V_{4}''(0)+\theta_{3}^{2}\frac{M}{\delta}\frac{\lambda}{\mu-M-\sigma^{2}\theta_{3}}x\right]e^{-\theta_{3}x}<0.
\end{align*}
Therefore $\frac{\partial c}{\partial x}(t,t,x)=V_{4}'(x)\leq1$ for
all $x>0$. 

(ii) For $x\geq b$, recalling (\ref{eq:4-2-13}),
\begin{align*}
V_{4}'(x) & =\left(-\theta_{3}D_{3}+B_{3}-\theta_{3}B_{3}x\right)e^{-\theta_{3}x}\\
 & \geq\left(-\theta_{3}D_{3}+B_{3}-\theta_{3}B_{3}b\right)e^{-\theta_{3}x}\\
 & =-\left[\theta_{3}\left(D_{3}+B_{3}b\right)-B_{3}\right]e^{-\theta_{3}x}\\
 & =e^{\theta_{3}\left(b-x\right)}>0,
\end{align*}
and
\begin{eqnarray*}
V_{4}''(x) & = & \theta_{3}\left(\theta_{3}D_{3}-2B_{3}+\theta_{3}B_{3}x\right)e^{-\theta_{3}x}\\
 & \leq & \theta_{3}\left[\theta_{3}\left(D_{3}+B_{3}b\right)-2B_{3}\right]e^{-\theta_{3}x}\\
 & = & \theta_{3}\left(-B_{3}-e^{\theta_{3}b}\right)e^{-\theta_{3}x}\\
 & < & \theta_{3}\left(-\frac{\lambda}{\mu-M-\sigma^{2}\theta_{3}}\frac{M}{\delta}-1\right)e^{\theta_{3}\left(b-x\right)}\\
 & \leq & \theta_{3}\left(\theta_{3}\frac{M}{\delta}\frac{\lambda}{\delta}-1\right)e^{\theta_{3}\left(b-x\right)}.
\end{eqnarray*}
The last inequality follows from Lemma \ref{lem:a-1}. Furthermore,
by (\ref{eq:b-1}), we have $\theta_{3}\frac{M}{\delta}\frac{\lambda}{\delta}-1\leq0$.
Therefore, $V_{4}''(x)<0,$ for $x\geq b.$

Now we see the case when $0\leq x<b$. It follows from (\ref{eq:4-2-4})
and (\ref{eq:4-2-9}) that
\begin{align*}
\frac{1}{2}\sigma^{2}V_{4}''(b-) & =-\mu V_{4}'(b)+\delta V_{4}(b)-\lambda V_{3}(b),\\
\frac{1}{2}\sigma^{2}V_{4}''(b+) & =-\left(\mu-M\right)V_{4}'(b)+\delta V_{4}(b)-\lambda V_{3}(b)-M\\
 & =-\mu V_{4}'(b)+\delta V_{4}(b)-\lambda V_{3}(b),
\end{align*}
which yields that $V_{4}''(b+)=V_{4}''(b-)=V_{4}''(b).$ Furthermore,
for $0\leq x<b$,
\begin{align*}
V_{4}'''(x) & =\theta_{1}^{2}\left[\theta_{1}\hat{C}-3B_{1}-\theta_{1}B_{1}x\right]e^{\theta_{1}x}+\theta_{2}^{2}\left[\theta_{2}\hat{C}-3B_{1}+\theta_{2}B_{1}x\right]e^{-\theta_{2}x}\\
 & >\theta_{1}^{2}\left[\theta_{1}\hat{C}-3B_{1}-\theta_{1}B_{1}b\right]e^{\theta_{1}x}+\theta_{2}^{2}\left[\theta_{2}\hat{C}-3B_{1}\right]e^{-\theta_{2}x}.
\end{align*}
It follows from Lemma \ref{lem:4-2} that if (\ref{eq:b-1}) holds,
then $V_{4}'''(x)>0$ for $0\leq x<b$. Since $V_{4}''(x)$ is continuous
at $x=b$ and $V_{4}''(b)<0$, we get that $V_{4}''(x)<0,$ for $0\leq x<b$.
Therefore, $c(t,t,x)=V_{4}(x)$ is a concave function on $(0,\infty)$,
which together with (\ref{eq:4-2-9}) implies (\ref{eq:4-2-19}).
\end{proof}
Similar to the proof of Corollary \ref{cor:4-3}, it is easy to verify
(\ref{eq:3-4}). We have the following corollary immediately by Theorem
\ref{thm:3-2} and Theorem \ref{thm:4-6}.
\begin{cor}
Assume that $0<\lambda<\delta$. Consider the discount function (\ref{eq:4-2-1}).
\begin{enumerate}[label=(\roman{enumi})]
\item If $\frac{\delta}{M}>\theta_{3}$ and $\lambda\leq\left(\frac{\delta}{M}-\theta_{3}\right)\left[\frac{1}{\mu-M-\sigma^{2}\theta_{3}}+\frac{\theta_{3}}{\delta}\right]^{-1}$,
then for $(t,x)\in[0,\infty)\times[0,\infty)$, 
\[
\hat{\pi}(t,x)=\phi\left(t,t,\frac{\partial c}{\partial x}(t,t,x),\frac{\partial^{2}c}{\partial x^{2}}(t,t,x)\right)=M,
\]
 is an equilibrium dividend strategy, and 
\[
V(t,x)=c(t,t,x)=\left(1+\frac{\lambda}{\delta}\right)\frac{M}{\delta}+\frac{M}{\delta}\left[\frac{\lambda}{\mu-M-\sigma^{2}\theta_{3}}x-\left(1+\frac{\lambda}{\delta}\right)\right]e^{-\theta_{3}x},
\]
is the corresponding equilibrium value function.
\item If (\ref{eq:4-2-16}) and (\ref{eq:b-1}) hold, then for $t\in[0,\infty)$,
\[
\hat{\pi}(t,x)=\phi\left(t,t,\frac{\partial c}{\partial x}(t,t,x),\frac{\partial^{2}c}{\partial x^{2}}(t,t,x)\right)=\begin{cases}
0, & x\in[0,b),\\
M, & x\in[b,\infty),
\end{cases}
\]
is an equilibrium dividend strategy, and 
\[
V(t,x)=c(t,t,x)=\begin{cases}
\left(\hat{C}-B_{1}x\right)e^{\theta_{1}x}-\left(\hat{C}+B_{1}x\right)e^{-\theta_{2}x}, & x\in[0,b),\\
\left(1+\frac{\lambda}{\delta}\right)\frac{M}{\delta}+\left(D_{3}+B_{3}x\right)e^{-\theta_{3}x}, & x\in[b,\infty),
\end{cases}
\]
is the corresponding equilibrium value function. Here $\left(b,\hat{C},B_{1},B_{3},D_{3}\right)$
is the solution to (\ref{eq:4-2-9}).
\end{enumerate}
\end{cor}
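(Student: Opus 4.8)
The plan is to obtain the corollary directly from the Verification Theorem (Theorem~\ref{thm:3-2}) combined with Theorem~\ref{thm:4-6}, exactly as Corollary~\ref{cor:4-3} was obtained from Theorems~\ref{thm:3-2} and \ref{thm:4-2}. Theorem~\ref{thm:4-6} already produces, in each of the two regimes, a function $c(\cdot,\cdot,\cdot)$ that solves the equilibrium HJB-equation \eqref{eq:3-2}; I would first observe that this $c$ meets the hypotheses of Theorem~\ref{thm:3-2}. With $s$ fixed it is $e^{-\delta(t-s)}$ times an affine function of $t$ whose coefficients $V_3(x),V_4(x)$ are finite linear combinations of $e^{\theta x}$, $e^{-\theta x}$ and $xe^{-\theta x}$; the smooth-fit relations \eqref{eq:4-2-6} and \eqref{eq:4-2-9}, together with the matching of the second $x$-derivative at $x=b$ that was verified inside the proof of Theorem~\ref{thm:4-6}, make $c(s,\cdot,\cdot)\in C^{1,2}$ across the barrier, and $c(s,\cdot,0)=0$ because $V_3(0)=V_4(0)=0$. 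Furthermore $c$ is bounded: each $V_i$ stays bounded as $x\to\infty$ (the $xe^{-\theta x}$ terms vanish there) and $\sup_{t\ge s}e^{-\delta(t-s)}(t-s)<\infty$. Hence Theorem~\ref{thm:3-2} applies once the terminal condition \eqref{eq:3-4} is checked, and it then yields at once that $\hat\pi$ from \eqref{eq:3-5} is an equilibrium control law and $V(t,x)=c(t,t,x)$ the corresponding equilibrium value function.

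The verification of \eqref{eq:3-4} I would carry out as in Corollary~\ref{cor:4-3}, splitting on $M\ge\mu$ versus $M<\mu$. If $M\ge\mu$, it is well known that $\pr\bigl(\tau_t^{\hat\pi}<\infty\bigr)=1$ under $\hat\pi$ (see, e.g., \citet{gs06b}); then for $n$ large $\tau_n=\tau_t^{\hat\pi}$, $X_{\tau_n}^{\hat\pi}=0$, and $c(s,\tau_t^{\hat\pi},0)=0$, so \eqref{eq:3-4} holds. If $M<\mu$, the same conclusion holds on $\{\tau_t^{\hat\pi}<\infty\}$, while on $\{\tau_t^{\hat\pi}=\infty\}$ one has $X_{\tau_t^{\hat\pi}}^{\hat\pi}=+\infty$, so $\tau_n=n\to\infty$ and $X_n^{\hat\pi}\to+\infty$; since $|c(s,t,x)|\le e^{-\delta(t-s)}\bigl(\lambda(t-s)\|V_3\|_\infty+\|V_4\|_\infty\bigr)\to0$ as $t\to\infty$ — here $\delta>0$ makes the exponential dominate the linear factor, and the boundedness of $V_3,V_4$ closes the estimate — \eqref{eq:3-4} again holds.

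It remains to read off the explicit answers. Putting $s=t$ in the ansatz \eqref{eq:4-2-2} gives $c(t,t,x)=V_4(x)$, so $V(t,x)=c(t,t,x)$ equals the $V_4$ of \eqref{eq:4-2-17} in case (i) and of \eqref{eq:4-2-18} in case (ii), which are exactly the claimed formulas. For the strategy, \eqref{eq:5-1} and $h(0)=1$ show that $\hat\pi(t,x)=\phi\bigl(t,t,V_4'(x),V_4''(x)\bigr)$ equals $M$ where $V_4'(x)<1$ and $0$ where $V_4'(x)\ge1$; the sign analysis of $V_4'$ carried out in the proof of Theorem~\ref{thm:4-6} then gives $\hat\pi\equiv M$ on $[0,\infty)$ in case (i) (where $V_4'\le1$ throughout) and $\hat\pi=0$ on $[0,b)$, $\hat\pi=M$ on $[b,\infty)$ in case (ii) (where $V_4'\ge1$ on $[0,b)$ and $V_4'<1$ on $[b,\infty)$). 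The only step that is not pure bookkeeping is the $M<\mu$ half of \eqref{eq:3-4}, and even there the argument mirrors Corollary~\ref{cor:4-3}, the key point being the decay $e^{-\delta(t-s)}(t-s)\to0$.
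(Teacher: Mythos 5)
Your proposal is correct and follows essentially the same route as the paper, which likewise disposes of this corollary by citing Theorem~\ref{thm:3-2} together with Theorem~\ref{thm:4-6} and verifying the transversality condition \eqref{eq:3-4} exactly as in Corollary~\ref{cor:4-3} (splitting on $M\ge\mu$ versus $M<\mu$). Your write-up merely supplies more detail (boundedness of $c$, the decay $e^{-\delta(t-s)}(t-s)\to 0$, and the reading-off of $V_4$ and of $\phi$ on the diagonal $s=t$) than the paper's one-line justification.
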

\begin{example}
Let $\mu=1$, $\sigma=1$, $M=1$, $\delta=0.8$. Figure \ref{fig:4-2}
shows the equilibrium value functions for pseudo-exponential discount
functions with $\lambda=0,$ 0.1 and 0.2. The barriers $b$ are 0.3470,
0.4141 and 0.4796, respectively. The case with $\lambda=0$ is time
consistent and the equilibrium strategy is optimal. 
\end{example}
\begin{figure}
\begin{centering}
\includegraphics[scale=1.2]{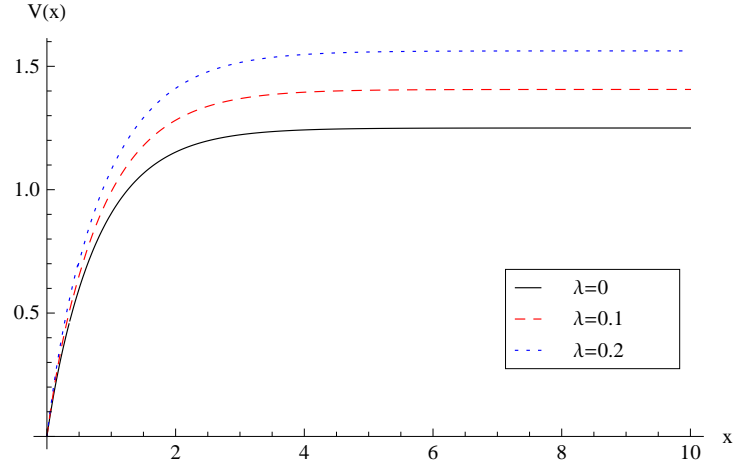}
\par\end{centering}

\caption{\label{fig:4-2}Equilibrium value functions with a pseudo-exponential
discount function }
\end{figure}

\appendix

\subsection*{Appendix A}

\setcounter{section}{1} \setcounter{equation}{0}\setcounter{thm}{0} 
\begin{lem}
\label{lem:a-1}$\frac{\theta_{1}}{\delta}-\frac{1}{\mu+\sigma^{2}\theta_{1}}>0$
and $\frac{1}{\mu-M-\sigma^{2}\theta_{3}}+\frac{\theta_{3}}{\delta}>0$.\end{lem}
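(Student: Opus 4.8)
The plan is to obtain both inequalities directly from the quadratic equations that define $\theta_{1}$ and $\theta_{3}$, without invoking their closed-form expressions at all (except for one easy sign check). Recall that $\theta_{1}=\theta_{1}(\mu)$ is the positive root of $\frac{1}{2}\sigma^{2}y^{2}+\mu y-\delta=0$, so $\mu\theta_{1}=\delta-\frac{1}{2}\sigma^{2}\theta_{1}^{2}$, and that $\theta_{3}=\theta_{2}(\mu-M)$, i.e. $-\theta_{3}$ is the negative root of $\frac{1}{2}\sigma^{2}y^{2}+(\mu-M)y-\delta=0$, so $(\mu-M)\theta_{3}=\frac{1}{2}\sigma^{2}\theta_{3}^{2}-\delta$.

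For the first inequality I would note that $\mu+\sigma^{2}\theta_{1}>0$ and $\theta_{1}>0$, so $\frac{\theta_{1}}{\delta}-\frac{1}{\mu+\sigma^{2}\theta_{1}}>0$ is equivalent to $\theta_{1}(\mu+\sigma^{2}\theta_{1})>\delta$. Substituting the relation above gives $\theta_{1}(\mu+\sigma^{2}\theta_{1})=\mu\theta_{1}+\sigma^{2}\theta_{1}^{2}=\delta+\frac{1}{2}\sigma^{2}\theta_{1}^{2}>\delta$, which settles it.

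For the second inequality, I would first record that $\sigma^{2}\theta_{3}-(\mu-M)=\sqrt{(\mu-M)^{2}+2\sigma^{2}\delta}>0$, so $\mu-M-\sigma^{2}\theta_{3}<0$; hence $\frac{1}{\mu-M-\sigma^{2}\theta_{3}}+\frac{\theta_{3}}{\delta}>0$ is equivalent — after multiplying by the positive quantity $\delta\bigl(\sigma^{2}\theta_{3}-(\mu-M)\bigr)$ and keeping track of the sign flip — to $\theta_{3}\bigl(\sigma^{2}\theta_{3}-(\mu-M)\bigr)>\delta$. Using $(\mu-M)\theta_{3}=\frac{1}{2}\sigma^{2}\theta_{3}^{2}-\delta$, the left-hand side equals $\sigma^{2}\theta_{3}^{2}-(\mu-M)\theta_{3}=\frac{1}{2}\sigma^{2}\theta_{3}^{2}+\delta>\delta$, which finishes the proof.

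The only point requiring any care is the sign of $\mu-M-\sigma^{2}\theta_{3}$ when clearing denominators in the second part, since it is negative and therefore reverses the inequality; apart from that bookkeeping the whole argument is a one-line substitution of the defining quadratics, so I do not anticipate any genuine obstacle.
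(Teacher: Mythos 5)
Your proof is correct, and it takes a genuinely different route from the paper's. The paper substitutes the explicit closed forms $\theta_{1}=\frac{-\mu+\sqrt{\mu^{2}+2\sigma^{2}\delta}}{\sigma^{2}}$ and $\theta_{3}=\frac{\mu-M+\sqrt{(\mu-M)^{2}+2\sigma^{2}\delta}}{\sigma^{2}}$, puts everything over a common (positive) denominator, and exhibits each numerator as a perfect square, e.g.\ $\bigl[\tfrac{\sqrt{2}}{2}\mu-\sqrt{\tfrac{1}{2}\mu^{2}+\sigma^{2}\delta}\bigr]^{2}$ for the first inequality. You instead clear denominators to reduce each claim to a product inequality ($\theta_{1}(\mu+\sigma^{2}\theta_{1})>\delta$, resp.\ $\theta_{3}(\sigma^{2}\theta_{3}-(\mu-M))>\delta$) and then use only the defining quadratic relations $\mu\theta_{1}=\delta-\tfrac{1}{2}\sigma^{2}\theta_{1}^{2}$ and $(\mu-M)\theta_{3}=\tfrac{1}{2}\sigma^{2}\theta_{3}^{2}-\delta$, so that each left-hand side becomes $\delta+\tfrac{1}{2}\sigma^{2}\theta_{\cdot}^{2}>\delta$. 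Your sign bookkeeping for $\mu-M-\sigma^{2}\theta_{3}<0$ is handled correctly (you multiply by the positive quantity $\delta(\sigma^{2}\theta_{3}-(\mu-M))$, and the reciprocal of the negative denominator contributes exactly $-\delta$). What your approach buys is that strict positivity falls out immediately from $\tfrac{1}{2}\sigma^{2}\theta_{\cdot}^{2}>0$ rather than from arguing that a perfect square of a difference of two distinct quantities is strictly positive, and it generalizes verbatim to any root of $\tfrac{1}{2}\sigma^{2}y^{2}+\eta y-\delta=0$ without recomputing square roots; the paper's computation, in exchange, displays the exact value of each expression, which could be reused quantitatively elsewhere. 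Either proof is complete and correct.
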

\begin{proof}
Recall that $\theta_{1}$ and $\theta_{3}$ are given by 
\[
\theta_{1}=\frac{-\mu+\sqrt{\mu^{2}+2\sigma^{2}\delta}}{\sigma^{2}},\q\theta_{3}=\frac{\mu-M+\sqrt{\left(\mu-M\right)^{2}+2\sigma^{2}\delta}}{\sigma^{2}}.
\]
Then it follows that
\begin{align*}
\frac{\theta_{1}}{\delta}-\frac{1}{\mu+\sigma^{2}\theta_{1}}= & -\frac{1}{\sqrt{\mu^{2}+2\sigma^{2}\delta}}+\frac{-\mu+\sqrt{\mu^{2}+2\sigma^{2}\delta}}{\sigma^{2}\delta}\\
= & \frac{\left(-\mu+\sqrt{\mu^{2}+2\sigma^{2}\delta}\right)\sqrt{\mu^{2}+2\sigma^{2}\delta}-\sigma^{2}\delta}{\sigma^{2}\delta\sqrt{\mu^{2}+2\sigma^{2}\delta}}\\
= & \frac{-2\frac{\sqrt{2}}{2}\mu\sqrt{\frac{1}{2}\mu^{2}+\sigma^{2}\delta}+\left[\frac{1}{2}\mu^{2}+\sigma^{2}\delta\right]+\frac{1}{2}\mu^{2}}{\sigma^{2}\delta\sqrt{\mu^{2}+2\sigma^{2}\delta}}\\
= & \frac{\left[\frac{\sqrt{2}}{2}\mu-\sqrt{\frac{1}{2}\mu^{2}+\sigma^{2}\delta}\right]^{2}}{\sigma^{2}\delta\sqrt{\mu^{2}+2\sigma^{2}\delta}}>0,
\end{align*}
and
\begin{align*}
\frac{1}{\mu-M-\sigma^{2}\theta_{3}}+\frac{\theta_{3}}{\delta}= & -\frac{1}{\sqrt{\left(\mu-M\right)^{2}+2\sigma^{2}\delta}}+\frac{\mu-M+\sqrt{\left(\mu-M\right)^{2}+2\sigma^{2}\delta}}{\sigma^{2}\delta}\\
= & \frac{\left(\mu-M+\sqrt{\left(\mu-M\right)^{2}+2\sigma^{2}\delta}\right)\sqrt{\left(\mu-M\right)^{2}+2\sigma^{2}\delta}-\sigma^{2}\delta}{\sigma^{2}\delta\sqrt{\left(\mu-M\right)^{2}+2\sigma^{2}\delta}}\\
= & \frac{2\frac{\sqrt{2}}{2}\left(\mu-M\right)\sqrt{\frac{1}{2}\left(\mu-M\right)^{2}+\sigma^{2}\delta}+\left[\frac{1}{2}\left(\mu-M\right)^{2}+\sigma^{2}\delta\right]+\frac{1}{2}\left(\mu-M\right)^{2}}{\sigma^{2}\delta\sqrt{\left(\mu-M\right)^{2}+2\sigma^{2}\delta}}\\
= & \frac{\left[\frac{\sqrt{2}}{2}\left(\mu-M\right)+\sqrt{\frac{1}{2}\left(\mu-M\right)^{2}+\sigma^{2}\delta}\right]^{2}}{\sigma^{2}\delta\sqrt{\left(\mu-M\right)^{2}+2\sigma^{2}\delta}}>0.
\end{align*}

\end{proof}
\begin{proof}
[Proof of Lemma \ref{lem:4-5}]It is easy to check that 
\begin{eqnarray*}
G(0) & = & \left(\theta_{1}+\theta_{2}\right)\left\{ \lambda\left[\frac{1}{\mu-M-\sigma^{2}\theta_{3}}+\frac{\theta_{3}}{\delta}\right]\frac{M}{\delta}+\theta_{3}\frac{M}{\delta}-1\right\} 
\end{eqnarray*}
By Lemma \ref{lem:a-1} and (\ref{eq:4-2-16}) we have $G(0)>0$.
Now by (\ref{eq:4-2-0}), (\ref{eq:4-2-00}) and (\ref{eq:4-2-000}),
we rewrite $G(b)$ as 
\begin{eqnarray*}
G(b) & = & \theta_{3}\frac{\lambda}{\mu+\sigma^{2}\theta_{1}}\frac{M\theta_{3}}{\delta}\frac{1}{\left(\theta_{1}+\theta_{3}\right)e^{\theta_{1}b}+\left(\theta_{2}-\theta_{3}\right)e^{-\theta_{2}b}}\left(-e^{2\theta_{1}b}+e^{-2\theta_{2}b}\right)\\
 &  & +\frac{\lambda}{\mu-M-\sigma^{2}\theta_{3}}\frac{M}{\delta}\frac{1}{\left(\theta_{1}+\theta_{3}\right)e^{\theta_{1}b}+\left(\theta_{2}-\theta_{3}\right)e^{-\theta_{2}b}}\left(\theta_{1}e^{\theta_{1}b}+\theta_{2}e^{-\theta_{2}b}\right)^{2}\\
 &  & +2\left(\theta_{1}+\theta_{2}\right)\theta_{3}\frac{\lambda}{\mu+\sigma^{2}\theta_{1}}\frac{M\theta_{3}}{\delta}\frac{1}{\left(\theta_{1}+\theta_{3}\right)e^{\theta_{1}b}+\left(\theta_{2}-\theta_{3}\right)e^{-\theta_{2}b}}be^{\left(\theta_{1}-\theta_{2}\right)b}\\
 &  & +\left[\theta_{1}\theta_{3}\left(1+\frac{\lambda}{\delta}\right)\frac{M}{\delta}-\left(\theta_{1}+\theta_{3}\right)\right]e^{\theta_{1}b}+\left[\theta_{2}\theta_{3}\left(1+\frac{\lambda}{\delta}\right)\frac{M}{\delta}-\left(\theta_{2}-\theta_{3}\right)\right]e^{-\theta_{2}b}\\
 & := & \frac{1}{\left(\theta_{1}+\theta_{3}\right)e^{\theta_{1}b}+\left(\theta_{2}-\theta_{3}\right)e^{-\theta_{2}b}}g(b),
\end{eqnarray*}
where 
\begin{align*}
g(b):= & 2\left(\theta_{1}+\theta_{2}\right)\theta_{3}\frac{\lambda}{\mu+\sigma^{2}\theta_{1}}\frac{M\theta_{3}}{\delta}be^{\left(\theta_{1}-\theta_{2}\right)b}\\
 & +e^{2\theta_{1}b}\left\{ -\frac{\lambda}{\mu+\sigma^{2}\theta_{1}}\frac{M\theta_{3}^{2}}{\delta}+\frac{\lambda}{\mu-M-\sigma^{2}\theta_{3}}\frac{M\theta_{1}^{2}}{\delta}+\left[\theta_{1}\theta_{3}\left(1+\frac{\lambda}{\delta}\right)\frac{M}{\delta}-\left(\theta_{1}+\theta_{3}\right)\right]\left(\theta_{1}+\theta_{3}\right)\right\} \\
 & +e^{\left(\theta_{1}-\theta_{2}\right)b}\left\{ 2\theta_{1}\theta_{2}\frac{\lambda}{\mu-M-\sigma^{2}\theta_{3}}\frac{M}{\delta}+\left[\theta_{1}\theta_{3}\left(1+\frac{\lambda}{\delta}\right)\frac{M}{\delta}-\left(\theta_{1}+\theta_{3}\right)\right]\left(\theta_{2}-\theta_{3}\right)\right.\\
 & \left.+\left[\theta_{2}\theta_{3}\left(1+\frac{\lambda}{\delta}\right)\frac{M}{\delta}-\left(\theta_{2}-\theta_{3}\right)\right]\left(\theta_{1}+\theta_{3}\right)\right\} \\
 & +e^{-2\theta_{2}b}\left\{ \frac{\lambda}{\mu+\sigma^{2}\theta_{1}}\frac{M\theta_{3}^{2}}{\delta}+\frac{\lambda}{\mu-M-\sigma^{2}\theta_{3}}\frac{M\theta_{2}^{2}}{\delta}+\left[\theta_{2}\theta_{3}\left(1+\frac{\lambda}{\delta}\right)\frac{M}{\delta}-\left(\theta_{2}-\theta_{3}\right)\right]\left(\theta_{2}-\theta_{3}\right)\right\} \\
= & 2\left(\theta_{1}+\theta_{2}\right)\theta_{3}\frac{\lambda}{\mu+\sigma^{2}\theta_{1}}\frac{M\theta_{3}}{\delta}be^{\left(\theta_{1}-\theta_{2}\right)b}\\
 & +e^{2\theta_{1}b}\left\{ \lambda\frac{M}{\delta}\left[\theta_{3}^{2}\left(-\frac{1}{\mu+\sigma^{2}\theta_{1}}+\frac{\theta_{1}}{\delta}\right)+\theta_{1}^{2}\left(\frac{1}{\mu-M-\sigma^{2}\theta_{3}}+\frac{\theta_{3}}{\delta}\right)\right]+\theta_{1}\theta_{3}\left(\theta_{1}+\theta_{3}\right)\left(\frac{M}{\delta}-\frac{1}{\theta_{3}}-\frac{1}{\theta_{1}}\right)\right\} \\
 & +e^{\left(\theta_{1}-\theta_{2}\right)b}\left\{ \lambda\frac{M}{\delta}\left[2\theta_{1}\theta_{2}\frac{1}{\mu-M-\sigma^{2}\theta_{3}}+\theta_{3}\left(2\theta_{1}\theta_{2}-\theta_{1}\theta_{3}+\theta_{2}\theta_{3}\right)\frac{1}{\delta}\right]\right.\\
 & \left.+\theta_{3}\frac{M}{\delta}\left(2\theta_{1}\theta_{2}-\theta_{1}\theta_{3}+\theta_{2}\theta_{3}\right)-2\left(\theta_{2}-\theta_{3}\right)\left(\theta_{1}+\theta_{3}\right)\right\} \\
 & +e^{-2\theta_{2}b}\left\{ \lambda\frac{M}{\delta}\left[\theta_{3}^{2}\left(\frac{1}{\mu+\sigma^{2}\theta_{1}}-\frac{\theta_{2}}{\delta}\right)+\theta_{2}^{2}\left(\frac{1}{\mu-M-\sigma^{2}\theta_{3}}+\frac{\theta_{3}}{\delta}\right)\right]+\left(\theta_{2}-\theta_{3}\right)\left[\theta_{2}\theta_{3}\frac{M}{\delta}-\left(\theta_{2}-\theta_{3}\right)\right]\right\} .
\end{align*}
From Lemma 2.1 of \citet{at97}, we have
\[
\frac{M}{\delta}-\frac{1}{\theta_{3}}-\frac{1}{\theta_{1}}<0.
\]
By Lemma \ref{lem:a-1} and (\ref{eq:4-2-16}), it is easy to see
that $G(\infty)<0$. Thus, the equation $G(b)=0$ admits a positive
solution. 
\end{proof}

\section*{Appendix B}

\setcounter{section}{2} \setcounter{equation}{0} 
\begin{lem}
\label{lem:4-2}If 
\begin{equation}
\lambda\leq\frac{\theta_{1}+\theta_{2}}{\theta_{1}+3\theta_{2}}\frac{\delta^{2}}{M\theta_{3}}\wedge\frac{\left(\theta_{1}+\theta_{3}\right)\left(\theta_{1}+\theta_{2}\right)}{2\theta_{1}\left(\theta_{1}+2\theta_{2}\right)}\frac{\delta^{2}}{M\theta_{3}},\label{eq:b-1}
\end{equation}
then
\[
\theta_{1}\hat{C}-3B_{1}-\theta_{1}B_{1}b>0.
\]
\end{lem}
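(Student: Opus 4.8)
The plan is to substitute the explicit smooth‑fit formulas for $\hat{C}$ and $B_{1}$ into $\theta_{1}\hat{C}-3B_{1}-\theta_{1}B_{1}b$ and reduce the assertion to the positivity of an elementary function of $b$ on $[0,\infty)$.

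First I would use the expression for $\hat{C}$ displayed right after (\ref{eq:4-2-12}), i.e.
\[
\hat{C}=\frac{1+(\theta_{1}b+1)B_{1}e^{\theta_{1}b}-(\theta_{2}b-1)B_{1}e^{-\theta_{2}b}}{\theta_{1}e^{\theta_{1}b}+\theta_{2}e^{-\theta_{2}b}},
\]
to write $\theta_{1}\hat{C}-3B_{1}-\theta_{1}B_{1}b=N(b)/(\theta_{1}e^{\theta_{1}b}+\theta_{2}e^{-\theta_{2}b})$. A direct computation, in which the $\theta_{1}b$–terms cancel in the coefficient of $e^{\theta_{1}b}$, gives
\[
N(b)=\theta_{1}-2\theta_{1}B_{1}e^{\theta_{1}b}-B_{1}(3\theta_{2}-\theta_{1}+2\theta_{1}\theta_{2}b)e^{-\theta_{2}b},
\]
and since the denominator is positive it suffices to prove $N(b)>0$. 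I would then make the $b$–dependence of $B_{1}$ explicit: by (\ref{eq:4-2-0}) and (\ref{eq:4-2-000}), $B_{1}D(b)=\lambda M\theta_{3}/[\delta(\mu+\sigma^{2}\theta_{1})]$, where $D(b):=(\theta_{1}+\theta_{3})e^{\theta_{1}b}+(\theta_{2}-\theta_{3})e^{-\theta_{2}b}>0$ since $\theta_{3}=\theta_{2}(\mu-M)<\theta_{2}$; using $\theta_{1}\theta_{2}=2\delta/\sigma^{2}$ and $\theta_{2}-\theta_{1}=2\mu/\sigma^{2}$ for the roots of $\tfrac12\sigma^{2}y^{2}+\mu y-\delta=0$ one obtains $\mu+\sigma^{2}\theta_{1}=\delta(\theta_{1}+\theta_{2})/(\theta_{1}\theta_{2})$, so that $B_{1}D(b)=\rho$ is the $b$–independent constant $\rho:=\lambda M\theta_{1}\theta_{2}\theta_{3}/[\delta^{2}(\theta_{1}+\theta_{2})]>0$.

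Multiplying $N(b)$ by $D(b)>0$ and expanding $\theta_{1}D(b)$ then yields
\[
N(b)D(b)=A\,e^{\theta_{1}b}+(B-2\rho\theta_{1}\theta_{2}b)\,e^{-\theta_{2}b},\quad A:=\theta_{1}(\theta_{1}+\theta_{3}-2\rho),\ \ B:=\theta_{1}(\theta_{2}-\theta_{3})-\rho(3\theta_{2}-\theta_{1}),
\]
so, after factoring out $e^{-\theta_{2}b}$, it remains to show that $Q(b):=A\,e^{(\theta_{1}+\theta_{2})b}+B-2\rho\theta_{1}\theta_{2}b>0$ for all $b\ge0$. The key is to translate the hypothesis: multiplying the two bounds in (\ref{eq:b-1}) by $M\theta_{1}\theta_{2}\theta_{3}/[\delta^{2}(\theta_{1}+\theta_{2})]$ turns them into $\rho(\theta_{1}+3\theta_{2})\le\theta_{1}\theta_{2}$ and $2\rho(\theta_{1}+2\theta_{2})\le(\theta_{1}+\theta_{3})\theta_{2}$. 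The second gives $2\rho<\theta_{1}+\theta_{3}$, hence $A>0$ and $Q$ strictly convex, and also $A(\theta_{1}+\theta_{2})=\theta_{1}[(\theta_{1}+\theta_{3})(\theta_{1}+\theta_{2})-2\rho(\theta_{1}+\theta_{2})]\ge\theta_{1}[2\rho(\theta_{1}+2\theta_{2})-2\rho(\theta_{1}+\theta_{2})]=2\rho\theta_{1}\theta_{2}$, i.e. $Q'(0)=A(\theta_{1}+\theta_{2})-2\rho\theta_{1}\theta_{2}\ge0$; by convexity $Q'\ge0$ on $[0,\infty)$, so $Q$ is nondecreasing. The first bound gives $Q(0)=A+B=\theta_{1}(\theta_{1}+\theta_{2})-\rho(\theta_{1}+3\theta_{2})\ge\theta_{1}^{2}>0$. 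Combining, $Q(b)\ge Q(0)>0$ for every $b\ge0$, whence $N(b)>0$ and the lemma follows.

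I expect the main obstacle to be the bookkeeping in the first step — verifying that the $\theta_{1}b$‑contributions really cancel so that $N(b)$ takes the clean form above — together with recognising that, rather than estimating $N(b)$ termwise, one should clear $D(b)$, peel off $e^{-\theta_{2}b}$, and exploit convexity of the resulting $Q$. It is precisely the check $Q'(0)\ge0$ that uses the second (less transparent) inequality in (\ref{eq:b-1}), while the first inequality is only needed to guarantee $Q(0)>0$.
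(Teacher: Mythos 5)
Your proposal is correct and follows essentially the same route as the paper: after substituting the smooth-fit formula for $\hat{C}$ and clearing the $b$-dependence of $B_{1}$, your $Q(b)$ is exactly the paper's auxiliary function $q(b)$, and the paper likewise concludes via $q(0)>0$, $q'(0)\geq 0$ and $q''>0$. The only cosmetic difference is that you verify the sign conditions by algebraically translating (\ref{eq:b-1}) into $\rho(\theta_{1}+3\theta_{2})\leq\theta_{1}\theta_{2}$ and $2\rho(\theta_{1}+2\theta_{2})\leq(\theta_{1}+\theta_{3})\theta_{2}$, whereas the paper invokes Lemma \ref{lem:a-1}.
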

\begin{proof}
It follows that
\begin{align}
 & \theta_{1}\hat{C}-3B_{1}-\theta_{1}B_{1}b\nonumber \\
= & \frac{\theta_{1}-2\theta_{1}\theta_{2}bB_{1}e^{-\theta_{2}b}-2B_{1}\theta_{1}e^{\theta_{1}b}+B_{1}\left(\theta_{1}-3\theta_{2}\right)e^{-\theta_{2}b}}{\theta_{1}e^{\theta_{1}b}+\theta_{2}e^{-\theta_{2}b}}\nonumber \\
= & \frac{\theta_{1}+B_{1}\left[-2\theta_{1}e^{\theta_{1}b}+\left(-2\theta_{1}\theta_{2}b+\theta_{1}-3\theta_{2}\right)e^{-\theta_{2}b}\right]}{\theta_{1}e^{\theta_{1}b}+\theta_{2}e^{-\theta_{2}b}}\nonumber \\
= & \frac{\theta_{1}\left[\theta_{1}+\theta_{3}-2\frac{\lambda}{\mu+\sigma^{2}\theta_{1}}\frac{M\theta_{3}}{\delta}\right]e^{\left(\theta_{1}+\theta_{2}\right)b}+\left[\theta_{1}\left(\theta_{2}-\theta_{3}\right)+\frac{\lambda}{\mu+\sigma^{2}\theta_{1}}\frac{M\theta_{3}}{\delta}\left(-2\theta_{1}\theta_{2}b+\theta_{1}-3\theta_{2}\right)\right]}{e^{\theta_{2}b}\left(\theta_{1}e^{\theta_{1}b}+\theta_{2}e^{-\theta_{2}b}\right)\left[\left(\theta_{1}+\theta_{3}\right)e^{\theta_{1}b}+\left(\theta_{2}-\theta_{3}\right)e^{-\theta_{2}b}\right]}.\label{eq:b-2}
\end{align}
Let 
\[
q(b):=\theta_{1}\left[\theta_{1}+\theta_{3}-2\frac{\lambda}{\mu+\sigma^{2}\theta_{1}}\frac{M\theta_{3}}{\delta}\right]e^{\left(\theta_{1}+\theta_{2}\right)b}+\left[\theta_{1}\left(\theta_{2}-\theta_{3}\right)+\frac{\lambda}{\mu+\sigma^{2}\theta_{1}}\frac{M\theta_{3}}{\delta}\left(-2\theta_{1}\theta_{2}b+\theta_{1}-3\theta_{2}\right)\right].
\]
Then
\begin{eqnarray*}
q'(b) & = & \theta_{1}\left(\theta_{1}+\theta_{2}\right)\left[\theta_{1}+\theta_{3}-2\frac{\lambda}{\mu+\sigma^{2}\theta_{1}}\frac{M\theta_{3}}{\delta}\right]e^{\left(\theta_{1}+\theta_{2}\right)b}-2\theta_{1}\theta_{2}\frac{\lambda}{\mu+\sigma^{2}\theta_{1}}\frac{M\theta_{3}}{\delta},\\
q''(b) & = & \theta_{1}\left(\theta_{1}+\theta_{2}\right)^{2}\left[\theta_{1}+\theta_{3}-2\frac{\lambda}{\mu+\sigma^{2}\theta_{1}}\frac{M\theta_{3}}{\delta}\right]e^{\left(\theta_{1}+\theta_{2}\right)b},
\end{eqnarray*}
and 
\begin{eqnarray*}
q(0) & = & \theta_{1}\left[\theta_{1}+\theta_{3}-2\frac{\lambda}{\mu+\sigma^{2}\theta_{1}}\frac{M\theta_{3}}{\delta}\right]+\theta_{1}\left(\theta_{2}-\theta_{3}\right)+\frac{\lambda}{\mu+\sigma^{2}\theta_{1}}\frac{M\theta_{3}}{\delta}\left(\theta_{1}-3\theta_{2}\right)\\
 & = & \theta_{1}\left(\theta_{1}+\theta_{2}\right)-\left(\theta_{1}+3\theta_{2}\right)\frac{\lambda}{\mu+\sigma^{2}\theta_{1}}\frac{M\theta_{3}}{\delta},\\
q'(0) & = & \theta_{1}\left(\theta_{1}+\theta_{2}\right)\left[\theta_{1}+\theta_{3}-2\frac{\lambda}{\mu+\sigma^{2}\theta_{1}}\frac{M\theta_{3}}{\delta}\right]-2\frac{\lambda}{\mu+\sigma^{2}\theta_{1}}\frac{M\theta_{3}}{\delta}\theta_{1}\theta_{2}\\
 & = & \theta_{1}\left(\theta_{1}+\theta_{2}\right)\left(\theta_{1}+\theta_{3}\right)-2\theta_{1}\left(\theta_{1}+2\theta_{2}\right)\frac{\lambda}{\mu+\sigma^{2}\theta_{1}}\frac{M\theta_{3}}{\delta}.
\end{eqnarray*}
If $\lambda\leq\frac{\theta_{1}+\theta_{2}}{\theta_{1}+3\theta_{2}}\frac{\delta^{2}}{M\theta_{3}}\wedge\frac{\left(\theta_{1}+\theta_{3}\right)\left(\theta_{1}+\theta_{2}\right)}{2\theta_{1}\left(\theta_{1}+2\theta_{2}\right)}\frac{\delta^{2}}{M\theta_{3}}$
holds, then it follows from Lemma \ref{lem:a-1} that 
\[
q(0)=\left(\theta_{1}+\theta_{2}\right)\left[\theta_{1}-\frac{\theta_{1}+3\theta_{2}}{\theta_{1}+\theta_{2}}\frac{\lambda}{\mu+\sigma^{2}\theta_{1}}\frac{M\theta_{3}}{\delta}\right]\geq\delta\left(\theta_{1}+\theta_{2}\right)\left(\frac{\theta_{1}}{\delta}-\frac{1}{\mu+\sigma^{2}\delta}\right)>0.
\]
and similarly, 
\[
q'(0)=\left(\theta_{1}+\theta_{2}\right)\left(\theta_{1}+\theta_{3}\right)\left[\theta_{1}-\frac{2\theta_{1}\left(\theta_{1}+2\theta_{2}\right)}{\left(\theta_{1}+\theta_{2}\right)\left(\theta_{1}+\theta_{3}\right)}\frac{\lambda}{\mu+\sigma^{2}\theta_{1}}\frac{M\theta_{3}}{\delta}\right]\geq\delta\left(\theta_{1}+\theta_{2}\right)\left(\theta_{1}+\theta_{3}\right)\left(\frac{\theta_{1}}{\delta}-\frac{1}{\mu+\sigma^{2}\delta}\right)>0.
\]
Thus, it follows from $q'(0)\geq0$ that 
\[
\theta_{1}\left(\theta_{1}+\theta_{2}\right)\left[\theta_{1}+\theta_{3}-2\frac{\lambda}{\mu+\sigma^{2}\theta_{1}}\frac{M\theta_{3}}{\delta}\right]\geq2\frac{\lambda}{\mu+\sigma^{2}\theta_{1}}\frac{M\theta_{3}}{\delta}\theta_{1}\theta_{2}>0.
\]
Therefore, $q''(b)>0,$ $q'(b)>0$, and then $q(b)>0$. Finally, it
follows from (\ref{eq:b-2}) that 
\[
\theta_{1}\hat{C}-3B_{1}-\theta_{1}B_{1}b>0.
\]

\end{proof}

\section*{Acknowledgments}

We would like to thank the referee(s) for valuable comments and suggestions.
This work was supported by National Natural Science Foundation of
China (10971068, 11231005), Doctoral Program Foundation of the Ministry
of Education of China (20110076110004), Program for New Century Excellent
Talents in University (NCET-09-0356) and \textquotedblleft{}the Fundamental
Research Funds for the Central Universities\textquotedblright{}. 

\bibliographystyle{plainnat}
\bibliography{mybib}

\end{document}